\DeclareMathOperator*{\argmax}{arg\,max}
\newcommand{\change}[1]{{#1}}
\newcommand{\remove}[1]{}
\newcommand{\smallconc}[2]{\begin{bsmallmatrix} #1 \\ #2 \end{bsmallmatrix}}
\newcommand{\clco}{\overline{\operatorname{co}}}
\newcommand{\co}{\operatorname{co}}
\newcommand{\R}{\mathbb{R}}
\newcommand{\IR}{\mathbb{IR}}
\newcommand{\calB}{\mathcal{B}}
\newcommand{\calJ}{\mathcal{J}}
\newcommand{\calM}{\mathcal{M}}
\newcommand{\calN}{\mathcal{N}}
\newcommand{\calR}{\mathcal{R}}
\newcommand{\calS}{\mathcal{S}}
\newcommand{\bbN}{\mathbb{N}}
\newcommand{\ul}[1]{\underline{#1}}
\newcommand{\ula}{\ul{a}}
\newcommand{\ol}[1]{\overline{#1}}
\newcommand{\ola}{\ol{a}}
\newcommand{\olM}{\ol{M}}
\newtheorem{thm}{Theorem}
\newtheorem{lemma}{Lemma}
\newtheorem{prop}{Proposition}
\newtheorem{definition}{Definition}
\newtheorem{example}{Example}
\newtheorem{corollary}{Corollary}
\newtheorem{rem}{Remark}
\theoremstyle{definition}
\definecolor{dblue}{rgb}{.098,.243,.424}
\definecolor{dcompb}{RGB}{157,35,0}  %
\title{A Linear Differential Inclusion for Contraction Analysis to Known Trajectories}
\author{Akash~Harapanahalli,~\IEEEmembership{Graduate Student Member,~IEEE,}
        and~Samuel~Coogan,~\IEEEmembership{Senior~Member,~IEEE}%
\thanks{Akash Harapanahalli and Samuel Coogan are with the School of Electrical and Computer Engineering, Georgia Institute of Technology, Atlanta, GA, USA, 30318. \{\texttt{aharapan},\texttt{sam.coogan}\}\texttt{@gatech.edu}}
}
\begin{document}
\maketitle

\begin{abstract}
Infinitesimal contraction analysis provides exponential convergence rates between arbitrary pairs of trajectories of a system by studying the system's linearization. 
An essentially equivalent viewpoint arises through stability analysis of a linear differential inclusion (LDI) encompassing the incremental behavior of the system.
In this note, we \change{use contraction tools to study the exponential stability} of a system to a particular known trajectory, deriving a new LDI characterizing the error between arbitrary trajectories and this known trajectory. 
As with classical contraction analysis, this new inclusion is constructed via first partial derivatives of the system's vector field, and \change{convergence} rates are obtained with familiar tools: uniform bounding of the logarithmic norm and LMI-based Lyapunov conditions. 
Our LDI is guaranteed to outperform a usual contraction analysis in two special circumstances: i) when the bound on the logarithmic norm arises from an interval overapproximation of the Jacobian matrix, and ii) when the norm considered is the $\ell_1$ norm.
Finally, we demonstrate how the proposed approach strictly improves an existing framework for ellipsoidal reachable set computation.
\end{abstract}

\begin{IEEEkeywords}
Contraction, nonlinear systems, stability
\end{IEEEkeywords}

\section{Introduction}

Contraction theory provides powerful tools for analyzing nonlinear systems by studying their linearizations; see~\cite{AminzareSontag:2014, Bullo:2023, DavydovBullo:2024} for recent surveys on the rich history of contraction analysis in dynamical systems. Applications of contraction analysis include: 
analysis and design of systems with inputs \cite{Sontag:2010} and networked systems~\cite{DeLellisRusso:2010,RussoSontag:2012};
incremental stability in systems with Riemannian \cite{LohmillerSlotine:1998} or Finsler structures~\cite{ForniSepulcre:2013}; 
control design using control contraction metrics~\cite{ManchesterSlotine:2017}; 
Lyapunov function design for monotone systems~\cite{Coogan:2019}; 
robustness analysis of implicit neural networks~\cite{JafarpourBullo:2021,JafarpourCoogan:2022}; 
robust stability with non-Euclidean norms~\cite{DavydovJafarpourBullo_NonEucContr:2022};
observer design with Riemannian metrics~\cite{SanfelicePraly_ConvergenceObserversI:2012};
\change{and local exponential stability about known trajectories on manifolds~\cite{WuYiRantzer_Stability:2024}}.

Consider the nonlinear system $\dot{x} = f(x)$ for differentiable $f:\R^n\to\R^n$.
Let $\calJ =\{\frac{\partial f}{\partial x}(x) : x\in \mathbb{R}^n\}$ be the set of all linearizations of the system.
A key result from contraction theory is that if the logarithmic norm $\mu(\cdot)$ (induced by some norm $|\cdot|$ as defined below) of these linearizations is uniformly bounded by a constant $c\in\R$, that is, $\mu(J) \leq c$ for every $J\in\calJ$, then
\begin{align} \label{eq:contr_bound}
    |x(t) - x'(t)| \leq e^{ct} |x(0) - x'(0)|
\end{align}
for any two trajectories $x$ and $x'$ (see, \emph{e.g.},~\cite[Theorem 3.9]{Bullo:2023}).
In Section~\ref{subsec:LDI_stability}, we recall how the bound \eqref{eq:contr_bound} equivalently arises through stability analysis of the linear differential inclusion (LDI) $\dot{\varepsilon} \in \clco(\calJ)\varepsilon$, characterizing the error dynamics (\change{$\varepsilon = x - x'$}) between two arbitrary trajectories of the system \change{(where $\clco$ denotes the closed convex hull)}.

\begin{figure}
    \centering
    \includesvg[width=\columnwidth]{gamma_vis.svg}
    \caption{A visual comparison of existing approaches (\textbf{left}), and the proposed approach (\textbf{right}) in $\R^3$, with $X = [-1,1]^3$, $x'= -[0.5\ 0.5\ 0.5]^T$, and $x=[0.5\ 0.5\ 0.5]^T$.
    \textbf{Left:} using the straight-line path requires $\calJ$ to include every possible Jacobian matrix $Df(x)$ for $x\in X$.
    \textbf{Right:} using the element-wise path requires $\calM$ to include every possible mixed Jacobian matrix $M_{x'}f(x,s)$ for $x\in X$ and $s\in[0,1]^3$ (see Definition~\ref{def:mixed_Jacobian}).
    For a fixed $s\in[0,1]^3$, $M_{x'}f(x,s)$ consists of the following three columns: $(M_{x'}f(x,s))_{:,1} = (Df(-0.5+s_1,-0.5,-0.5))_{:,1}$, $(M_{x'}f(x,s))_{:,2} = (Df(0.5,-0.5+s_2,-0.5))_{:,2}$, $(M_{x'}f(x,s))_{:,3} = (Df(0.5,0.5,-0.5+s_3))_{:,3}$.
    \change{Correspondingly, we need to bound these columns over $X_1\times \{x'_2\} \times \{x'_3\}$ (green), $X_1\times X_2\times\{x'_3\}$ (red), and $X$ (blue).}
    }
    \label{fig:gamma_vis}
\end{figure}

In this note, given a particular known trajectory $x'$, we build a new linear differential inclusion $\dot{\varepsilon}\in\clco(\calM)\varepsilon$, bounding the error dynamics of the system between an arbitrary trajectory $x$ and the fixed trajectory $x'$, which we subsequently use to study \change{exponential stability} to $x'$.
The key novelty of our approach is to consider an element-wise path between $x'$ and $x$, rather than the straight line connecting them (see Figure~\ref{fig:gamma_vis}), which potentially improves \change{logarithmic norm} estimates when $x'$ is known.
The set of matrices $\calM$ used to construct this LDI is inspired by existing results in the interval analysis literature. A result similar to \change{our Corollary~\ref{cor:intervalM} below} was originally proposed in~\cite{Hanson:1968} to find solutions to the system of equations $f(x) = 0$ and further used to define the \emph{mixed centered inclusion function}~\cite[Section 2.4.4]{Jaulin:2001}, which improves upon a class of Jacobian-based inclusion functions that over-approximates the \change{output} of a function using interval bounds of its \change{first-order partial derivatives}.

One of the features of contraction theory is its generality in comparing any two trajectories of the system, rather than comparing to a known fixed trajectory. For instance, if $c<0$ in~\eqref{eq:contr_bound} (strongly contracting), it can be shown that every trajectory will converge to an \emph{a priori} unknown unique equilibrium point~\cite[Theorem 3.9]{Bullo:2023}. 
However, in many \change{computational} applications of contraction theory, the trajectory $x'$ in~\eqref{eq:contr_bound} is fixed to a known trajectory of the system \change{\emph{at runtime or during execution}}, motivating the setting of this work. 
For example, in reachable set computation~\cite{MaidensArcak:2014,MaidensArcak:2014b,ArcakMaidens:2018}, a single trajectory of the system is simulated and a full reachable tube around this trajectory is computed by expanding or contracting a norm ball using an upper bound of the logarithmic norm.
For robustness analysis of implicit neural networks~\cite{JafarpourBullo:2021,JafarpourCoogan:2022}, contraction theory adds robustness by analyzing a contraction condition around nominal data samples.
Finally, contraction has been used to \change{analyze local stability to known trajectories on manifolds~\cite{WuYiRantzer_Stability:2024} and to} design feedback controllers for trajectory tracking~\cite{ManchesterSlotine:2017}, which fix $x'$ to the trajectory to be tracked.

The note is structured as follows.
In Section~\ref{sec:preliminaries}, we recall how the usual contraction bound~\eqref{eq:contr_bound} equivalently arises through stability analysis of an LDI encompassing the incremental dynamics of the system.
In Section~\ref{sec:newLDI}, we build a similar LDI using our proposed mixed Jacobian operator in Definition~\ref{def:mixed_Jacobian}, where we traverse an element-wise path resulting in a different set of matrices bounding the output of a function. 
In Section~\ref{sec:nlsyscontr}, we apply the LDI to study \change{stability} to an \emph{a priori} known trajectory $x'$, where a uniform bound of the logarithmic norm of our new mixed Jacobian matrix set yields \change{an exponential} bound.
Finally, in Section~\ref{sec:reach}, we use interval analysis to bound the mixed Jacobian matrix, culminating in an improved algorithm for computing ellipsoidal reachable sets.

\section{Preliminaries} \label{sec:preliminaries}

\subsection{Notation}
Let $|\cdot|$ denote a norm on $\R^n$. 
For a matrix $A\in\R^{n\times n}$, let $\|A\| = \sup_{x\in\R^n : |x| = 1} |Ax|$ denote the induced norm on $\R^{n\times n}$. 
Let $\mu(A) = \lim_{h\searrow 0} \frac{\|I + hA\| - 1}{h}$ denote the (induced) logarithmic norm, also called the matrix measure.
For $x\in\R^n$, let $|x|_1 = \sum_{i=1}^n|x_i|$ denote the $\ell_1$-norm, and $\mu_1$ be the induced logarithmic norm.
For $x\in\R^n$ and positive definite $P \succ 0$, let $|x|_{2,P^{1/2}} = \sqrt{x^T Px}$ be the $P$-weighted $\ell_2$ norm on $\R^n$, $\mu_{2,P^{1/2}}$ be its induced logarithmic norm, and $\calB_r^P(x') := \{x\in\R^n : |x - x'|_{2,P^{1/2}} \leq r\}$ be the closed ball of radius $r$ about $x'\in \R^n$. 
The following linear matrix inequality (LMI) provides a convex characterization of $\mu_{2,P^{1/2}}$,
\begin{align} \label{eq:ell2_LMI}
    \mu_{2,P^{1/2}}(M) \leq c \iff M^TP + PM \preceq 2cP.
\end{align}
Let $\IR$ denote the set of closed intervals of $\R$, of the form 
$[\ula,\ola]:=\{a\in\R : \ula\leq a\leq\ola\}$ for $\ula,\ola\in\R\cup\{\pm\infty\}$.
An interval matrix $[\calS]\in\IR^{m\times n}$ is a matrix of closed intervals in $\R$, \emph{i.e.}, $[\calS]_{ij} \in \IR$. 
$[\calS]$ is a subset of $\R^{m\times n}$ in the entrywise sense, so $A\in [\calS]$ is $A_{ij}\in[\calS]_{ij}$ for every $i=1,\dots,m$ and $j=1,\dots,n$.
Given a set $\calS\subseteq\R^{m\times n}$, there is a unique smallest interval $[\calS]$ containing $\calS$, \emph{i.e.}, $\bigcap_{[\calS]\in\IR^{m\times n} : \calS \subseteq [\calS]} [\calS]$. This is well defined since the closed intervals on $\R$ are closed to arbitrary intersection, and $[\calS]$ is defined entrywise.

For any map $f$ and any subset $X$, let $f(X) = \{f(x) : x\in X\}$ denote the set-valued image of $f$ over $X$. \change{Let $\co(X)$ denote the convex hull of $X$ and} $\clco(X) = \operatorname{cl}(\co(X))$ denote the closed convex hull of $X$.
Let $D^+$ denote the upper Dini derivative, \emph{i.e.}, $D^+f(t) = \limsup_{h\searrow 0} \frac{f(t+h) - f(t)}{h}$ for a continuous map $f:\R\to\R$.
For a differentiable map $f:\R^n\to\R^m$, let $Df:\R^n\to\R^{m\times n}$ denote the Jacobian map, such that $(Df(x))_{ij} = \frac{\partial f_i}{\partial x_j}(x)$.

\change{
For the ODE $\dot{x} = f(t,x)$, we call $x:J\to\R^n$ for some interval $J\in\IR$ a \emph{trajectory} if $x$ is differentiable and satisfies the ODE for every $t\in\operatorname{int}(J)$ (we do not require $J$ to be the maximal interval of existence).
We say the trajectory $x$ is in $X\subseteq\R^n$ if $x(t)\in X$ for every $t\in J$.
}

\change{Given a subset $X\subseteq\R^n$, we use the following definitions:
\begin{itemize}
    \item The system $\dot{x} = f(t,x)$ is \change{\emph{uniformly incrementally exponentially stable (UIES) at rate $c \in\R$ in $X$}} if for any two trajectories $x$ and $x'$ in $X$ each defined on $[t_0,T]$,
    \begin{align*}
        |x(t) - x'(t)| \leq e^{c(t - t_0)} |x(t_0) - x'(t_0)|,
    \end{align*}
    for every $t\in[t_0,T]$.
    \item Given a known trajectory $x'$ in $X$ defined on $J$, the system $\dot{x} = f(t,x)$ is \change{\emph{uniformly exponentially stable (UES) to $x'$ at rate $c\in\R$ in $X$}} if for any trajectory $x$ in $X$ defined on $[t_0,T]\subseteq J$,
    \begin{align*}
        |x(t) - x'(t)| \leq e^{c(t - t_0)} |x(t_0) - x'(t_0)|,
    \end{align*}
    for every $t\in [t_0,T]$.
\end{itemize}
}
\change{
We note that in the above definitions, $T$ can take the value $+\infty$, $X$ need not be forward invariant, and $c$ need not be negative.
In the literature, UIES has also been referred to as contraction at rate $c$.
}

\subsection{Stability Analysis of LDIs for Contraction Analysis} \label{subsec:LDI_stability}

In this section, we review approaches for stability analysis of LDIs and their connection to contraction analysis. 
A \emph{linear differential inclusion} (LDI) is given by \cite[p.52]{Boyd:1994uq}
\begin{align}
  \label{eq:LDI}
  \dot{x}\in \Omega x
\end{align}
where $\Omega\subseteq \mathbb{R}^{n\times n}$ is a set of matrices. Any $x:[t_0,T]\to\R^n$ satisfying \eqref{eq:LDI} is called a \emph{trajectory} of the LDI. 

Stability of the LDI~\eqref{eq:LDI} has been well studied in the context of robustness analysis of (time-varying) linear systems under uncertainties~\cite{MolchanovPyatnitskiy_StabDIControl:1989,HuBlanchini_LMIStabLDI:2010}. 
In these settings, $\Omega$ is generally an \emph{a priori} known polytope of possible parameters, and the goal is to ensure that every possible choice of $A\in\Omega$ leads to stable system behavior.
The following Lemma recalls a standard result whereby if the logarithmic norm of every matrix $M\in\Omega$ is uniformly bounded by $c$, then any trajectory $x(t)$ of the LDI \eqref{eq:LDI} is norm bounded by a factor of $e^{ct}$.

\begin{lemma}
  \label{lem:LDI_ect}
Consider the LDI $\dot{x} \in \Omega x$ and some norm $|\cdot|$ on $\mathbb{R}^n$. If $\mu(M)\leq c$ for all $M\in \Omega$, then \change{for any trajectory $x$ defined on $[t_0,T]$},
\begin{align*}
    |x(t)|\leq e^{c(t - t_0)}|x(t_0)|,
\end{align*}
for every $t\in[t_0,T]$.
\end{lemma}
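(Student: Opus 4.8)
The plan is to use the logarithmic norm as the infinitesimal generator of an upper bound on the growth of $|x(t)|$, via a standard Dini-derivative argument. Fix a trajectory $x$ of the LDI defined on $[t_0,T]$, and set $v(t) = |x(t)|$. The first step is to show that $v$ is continuous (immediate, since $x$ is continuous and $|\cdot|$ is Lipschitz) and to compute an upper bound on its upper Dini derivative $D^+ v(t)$. The key classical fact I would invoke is the ``curve norm derivative'' inequality: for an absolutely continuous curve $x(\cdot)$,
\begin{align*}
    D^+ |x(t)| \leq \limsup_{h\searrow 0} \frac{|x(t) + h\dot{x}(t)| - |x(t)|}{h} \cdot \frac{1}{1} \,,
\end{align*}
and then, using $\dot{x}(t) = M(t) x(t)$ for some $M(t)\in\Omega$ (which exists by definition of a trajectory of the LDI), bound $|x(t) + h M(t) x(t)| = |(I + hM(t))x(t)| \leq \|I + hM(t)\|\,|x(t)|$, so that
\begin{align*}
    D^+ |x(t)| \leq \Big(\limsup_{h\searrow 0}\frac{\|I + hM(t)\| - 1}{h}\Big)|x(t)| = \mu(M(t))\,|x(t)| \leq c\,|x(t)|,
\end{align*}
where the last inequality is the hypothesis $\mu(M)\leq c$ for all $M\in\Omega$. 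Hence $D^+ v(t) \leq c\, v(t)$ on $[t_0,T]$.

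The second step is a comparison-lemma / Grönwall argument for Dini derivatives: from $D^+ v(t) \leq c\,v(t)$ with $v$ continuous, conclude $v(t) \leq e^{c(t-t_0)} v(t_0)$ for all $t\in[t_0,T]$. I would do this by considering $w(t) = e^{-c(t-t_0)} v(t)$, noting $D^+ w(t) = e^{-c(t-t_0)}\big(D^+ v(t) - c\,v(t)\big) \leq 0$, and then invoking the standard fact that a continuous function with nonpositive upper Dini derivative on an interval is nonincreasing. This gives $w(t)\leq w(t_0) = v(t_0)$, i.e. $|x(t)| \leq e^{c(t-t_0)}|x(t_0)|$, which is exactly the claim.

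The main subtlety — not so much an obstacle as a point requiring care — is the justification of the inequality $D^+ |x(t)| \leq \mu(M(t))|x(t)|$, since it hinges on the interchange of two limits: the difference quotient defining $D^+$ and the one defining $\mu$. This is a well-known estimate (see, e.g., the contraction literature cited in the paper, such as \cite{Bullo:2023}), and the clean way to see it is: $|x(t+h)| - |x(t)| \leq |x(t+h) - x(t) - h\dot x(t)| + |x(t) + h\dot x(t)| - |x(t)|$; the first term is $o(h)$ by differentiability of $x$ at $t$, and the second is bounded using $\|I+hM(t)\|$ as above. Dividing by $h$ and taking $\limsup_{h\searrow 0}$ kills the $o(h)/h$ term and produces $\mu(M(t))|x(t)|$. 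A secondary minor point is that the argument only needs $x$ to be differentiable (hence the ODE/inclusion holds) on the interior of $[t_0,T]$ and continuous up to the endpoints, which matches the paper's definition of a trajectory; the Dini-monotonicity lemma then still applies on the closed interval by continuity.
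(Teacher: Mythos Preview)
Your proposal is correct and follows essentially the same route as the paper: bound $D^+|x(t)|$ by $\mu(M(t))|x(t)|\leq c|x(t)|$ via the $\|I+hM(t)\|$ estimate, then apply a Gr\"onwall-type inequality for Dini derivatives. You are in fact more careful than the paper in justifying the replacement of $|x(t+h)|$ by $|x(t)+h\dot x(t)|$ (the $o(h)$ decomposition) and in spelling out the Gr\"onwall step via $w(t)=e^{-c(t-t_0)}|x(t)|$, which the paper simply cites.
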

\change{
\begin{proof}
    For any $t\in[t_0,T]$ there exists $M(t)\in\Omega$ such that
    \begin{align*}
        D^+& |x(t)| = \textstyle\limsup_{h\searrow 0} \tfrac1h (|x(t + h)| - |x(t)|) \\
        &= \textstyle\limsup_{h\searrow 0} \tfrac1h (|x(t) + hM(t)x(t)| - |x(t)|)  \\
        &\leq \textstyle\limsup_{h\searrow 0} \tfrac1h (\|I + hM(t)\| - 1) |x(t)| \leq c|x(t)|.
    \end{align*}
    The Gr\"onwall inequality~\cite[Prop. 2]{Lorenz_Mutational} completes the proof.
\end{proof}
}
Lemma~\ref{lem:LDI_ect} can be viewed as a corollary of Coppel's inequalities \cite[Theorem 27]{Desoer:2008bh}.
Lemma~\ref{lem:LDI_ect} and the \change{LMI}~\eqref{eq:ell2_LMI} are the essential ingredients for quadratic stability analysis of LDIs \cite[Ch. 4 and 5]{Boyd:1994uq}, where the \change{LMI} is used as a constraint in a semi-definite program.
Next, we recall an inclusion obtained using the mean value theorem and ideas from convex analysis.

\begin{prop}[{\relax\cite[p.55]{Boyd:1994uq}}] \label{prop:J_LDI}
    Let $f:\R^n\to\R^m$ be differentiable and $X\subseteq\R^n$ be convex. If $\calJ\subseteq\R^{m\times n}$ satisfies
    \begin{align*}
        Df(X) \subseteq \calJ,
    \end{align*}
    then 
    \begin{align} \label{eq:J_LDI}
        f(x) - f(x') \in \clco(\calJ) (x - x')
    \end{align}
    for every $x,x'\in X$.
\end{prop}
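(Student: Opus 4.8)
The plan is to prove this via the mean value theorem applied componentwise, combined with a convex-analysis / convexity argument to pass from a pointwise statement about individual rows to the claimed inclusion involving the closed convex hull $\clco(\calJ)$. First I would fix $x, x' \in X$ and, for each component index $i = 1, \dots, m$, apply the scalar mean value theorem to the function $t \mapsto f_i(x' + t(x-x'))$ on $[0,1]$; since $X$ is convex, the segment stays in $X$, so there is some $\theta_i \in (0,1)$ with $f_i(x) - f_i(x') = \nabla f_i(z_i)^T (x - x')$ where $z_i = x' + \theta_i(x-x') \in X$. Stacking these gives $f(x) - f(x') = \widetilde{J}(x-x')$ where $\widetilde{J}$ is the matrix whose $i$-th row is $\nabla f_i(z_i)^T = (Df(z_i))_{i,:}$. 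Note $\widetilde{J}$ need not itself lie in $\calJ$, because each row comes from the Jacobian evaluated at a possibly different point $z_i \in X$; this is exactly why we cannot conclude $f(x)-f(x') \in \calJ(x-x')$ and must enlarge to the convex hull.

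The key step, and the main obstacle, is to show that this ``mixed-row'' matrix $\widetilde{J}$ — whose $i$-th row is an arbitrary row of some $Df(z_i)$ with $z_i \in X$ — still satisfies $\widetilde{J}(x-x') \in \clco(\calJ)(x-x')$. The plan here is to argue coordinate-by-coordinate on the output. Write $y = x - x'$ and let $w = \widetilde{J} y \in \R^m$, so $w_i = (Df(z_i))_{i,:}\, y$. Since $Df(z_i) \in Df(X) \subseteq \calJ$, each real number $w_i$ lies in the set $\{(J y)_i : J \in \calJ\} \subseteq \{(J y)_i : J \in \clco(\calJ)\}$. The set $K := \{J y : J \in \clco(\calJ)\} \subseteq \R^m$ is convex (image of the convex set $\clco(\calJ)$ under the linear map $J \mapsto Jy$). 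I would then invoke the fact that a convex set in $\R^m$ contains a point $w$ as soon as, for a suitable representation, each coordinate can be matched — more carefully, I would use that $w$ can be written as a convex combination: pick, for each $i$, a matrix $J^{(i)} \in \calJ$ with $(J^{(i)} y)_i = w_i$ (take $J^{(i)} = Df(z_i)$), and observe that $w = \sum_i e_i e_i^T J^{(i)} y$. This is not literally a convex combination of the $J^{(i)} y$, so the cleaner route is: the matrix $\widetilde{J}$ obtained by selecting row $i$ from $J^{(i)} \in \calJ$ is itself an element of $\clco(\calJ)$. Indeed, for the $\ell_1$-type or general argument one shows that the set of matrices obtainable by independently choosing each row from the corresponding rows of matrices in a set $S$ is contained in $\clco(S)$ whenever — hmm, this is false in general. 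So the correct and safe argument is genuinely the coordinatewise one on $K$: since $K$ is convex and, crucially, since we only need $\widetilde J y \in K$ (not $\widetilde J \in \clco(\calJ)$), we argue that $\widetilde J y$ lies in the convex hull of the vectors $\{J y : J \in \calJ\}$ by a Carathéodory-style / separating-hyperplane argument, checking that no hyperplane separates $\widetilde J y$ from $\clco(\{Jy : J \in \calJ\})$.

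Concretely, I would finish as follows. Suppose for contradiction that $w = \widetilde{J}y \notin K = \clco(\calJ) y = \clco(\{Jy : J \in \calJ\})$ (the last equality because $J \mapsto Jy$ is linear and continuous). By the separating hyperplane theorem there is $p \in \R^m$ with $p^T w > \sup_{J \in \calJ} p^T J y$. But $p^T w = \sum_i p_i w_i = \sum_i p_i (Df(z_i))_{i,:}\, y = \sum_i p_i (Df(z_i) y)_i$. For each $i$ we have $p_i (Df(z_i) y)_i \le$ ... this still does not immediately bound by $\sup_J p^T J y$ because the supremum mixes all coordinates. Therefore the genuinely correct formulation restricts to the separating functional being a coordinate functional, which forces us back to: for each coordinate $i$, $w_i \in \{(Jy)_i : J \in \calJ\}$, hence $w_i \in \operatorname{conv}\{(Jy)_i : J\in\calJ\} = [\inf_J (Jy)_i, \sup_J (Jy)_i]$. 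Since $K$ is a convex subset of $\R^m$ but is \emph{not} in general the product of its coordinate projections, the cleanest honest proof follows Boyd et al.: one shows directly that $\widetilde J \in \clco(\calJ)$ using that $\clco(\calJ)$, being convex and closed, equals the intersection of all halfspaces containing $\calJ$, and for a halfspace $\{M : \langle A, M\rangle \le b\}$ containing $\calJ$ one checks $\langle A, \widetilde J \rangle \le b$ — which again fails for general $A$. Given this subtlety, I expect the actual proof in the paper to restrict attention to the vector $\widetilde J (x-x')$ and to invoke convexity of $\clco(\calJ)(x-x')$ together with the fact that each summand in a componentwise decomposition can be realized, concluding $f(x)-f(x') \in \clco(\calJ)(x-x')$; reconciling the componentwise mean value theorem output with membership in the convex hull is the step I would scrutinize most carefully and is the main obstacle.
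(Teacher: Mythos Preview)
Your proposal contains a genuine gap, and you have correctly put your finger on it: after applying the mean value theorem componentwise you obtain a ``mixed-row'' matrix $\widetilde J$ (row $i$ taken from $Df(z_i)$ with the $z_i$ possibly distinct), and you then need $\widetilde J(x-x')\in\clco(\calJ)(x-x')$. As you observe, neither $\widetilde J\in\clco(\calJ)$ nor the coordinate-projection argument on $K=\clco(\calJ)(x-x')$ goes through in general, and your separating-hyperplane attempt stalls at $p^T w=\sum_i p_i\,(Df(z_i)y)_i$, which cannot be bounded by $\sup_{J\in\calJ}p^TJy$ because the $z_i$ differ across $i$.

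The paper's proof avoids this entirely by reversing the order of operations: it fixes the linear functional first and only then applies the mean value theorem. For a fixed $\ell\in\R^m$, apply the scalar mean value theorem to $s\mapsto \ell^T f(\gamma(s))$ with $\gamma(s)=sx+(1-s)x'$, yielding a \emph{single} $s'=s'(\ell)\in(0,1)$ with
\[
\ell^T(f(x)-f(x'))=\ell^T Df(\gamma(s'))(x-x')\leq \sup_{J\in\clco(\calJ)}\ell^T J(x-x').
\]
Since $\gamma(s')\in X$ by convexity, $Df(\gamma(s'))\in\calJ$, and the inequality is immediate---no mixed rows arise. As $\ell$ was arbitrary, $f(x)-f(x')$ lies in every closed halfspace containing the closed convex set $\clco(\calJ)(x-x')$, hence in that set. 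The missing idea in your argument is precisely this: apply the mean value theorem to $\ell^T f\circ\gamma$ (equivalently, to $p^T f\circ\gamma$ for your separating $p$) rather than to each $f_i\circ\gamma$; then the intermediate point depends on $\ell$ but is the same across all output coordinates, and the halfspace bound is direct.
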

The proof is in~\cite[Section 4.3.1, p.55]{Boyd:1994uq}; we provide it here for comparison with that of Theorem~\ref{thm:mixed} in the next Section.
\begin{proof}
    Fix $\ell\in\R^m$ and $x,x'\in X$. Consider the curve $\gamma:[0,1]\to X$, $\gamma(s) = sx + (1-s)x'$. Since $\gamma$ is continuous on $[0,1]$ and differentiable on $(0,1)$, applying the mean value theorem, there exists $s'\in(0,1)$ such that
    \begin{align*}
        \ell^T(f(\gamma(1)) - f(\gamma(0))) = \ell^T Df(\gamma(s'))(\gamma(1) - \gamma(0)).
    \end{align*}
    Since $\gamma(s)\in X$ by convexity of $X$, $Df(\gamma(s'))\in\calJ$. Thus,
    \begin{align*}
        \ell^T(f(x) - f(x')) \leq \sup_{J\in\clco(\calJ)} \ell^T J(x - x'),
    \end{align*}
    which implies that $f(x) - f(x')$ belongs to every halfspace containing $\clco(\calJ)(x-x')$, since $\ell$ was arbitrary. 
    But since $\clco(\calJ)(x - x')$ is closed and convex, it equals the intersection of these halfspaces, leading to~\eqref{eq:J_LDI}.
\end{proof}

Lemma~\ref{lem:LDI_ect} and Proposition~\ref{prop:J_LDI} \change{together} recover a standard result from infinitesimal contraction theory. Consider the system
\begin{align*}
    \dot{x} = f(x),
\end{align*}
for $x\in\R^n$, $f:\R^n\to\R^n$ continuously differentiable, and let $X\subseteq\R^n$ be a convex set. If $Df(X)\subseteq\calJ$, then Proposition~\ref{prop:J_LDI} builds the following LDI \change{on $\varepsilon = x - x'$, for any $x,x'\in\R^n$,}
\begin{align*}
    \dot{\varepsilon} = f(x) - f(x') \in \clco(\calJ) (x - x') = \clco(\calJ)\varepsilon.
\end{align*}
Let $\sup_{J\in\calJ} \mu(J) \leq c$. By convexity \change{and continuity} of $\mu$, we have $\sup_{J\in \clco(\calJ)} \mu(J) \leq c$ \change{(see the proof of Theorem~\ref{thm:main} below)}.
Applying Lemma~\ref{lem:LDI_ect}, we therefore see that
\begin{align} \label{eq:basic_contraction}
    \sup_{x\in X}\mu(Df(x)) \leq c \implies |x(t) - x'(t)| \leq e^{ct} |x_0 - x'_0|,
\end{align}
where $x,x'$ are trajectories \change{in $X$} from initial conditions $x_0,x'_0\in X$ at time $0$.
In other words, a uniform bound $c$ for the logarithmic norm $\mu(Df(x))$ implies that the system $\dot{x} = f(x)$ is \change{UIES} at rate $c$. 

\change{An alternate proof of this result integrates along the line segment $\gamma$ as $A(x) = \int_0^1 Df(sx + (1-s)x')\ \mathrm{d} s$ replacing the expression~\eqref{eq:J_LDI} with $f(x) - f(x') = A(x) (x - x')$~\cite[Lemma 1]{Sontag:2010}~\cite[Lemma 2]{AminzareSontag:2014}.
The LDI~\eqref{eq:J_LDI} avoids the need to compute an integral when overapproximating these matrices.
Other proof techniques include verifying a Finsler-Lyapunov condition on the tangent bundle~\cite{ForniSepulcre:2013,WuDuan_GeoLyapCharISSFinsler:2022}, and using weak pairings compatible with the norm~\cite{DavydovJafarpourBullo_NonEucContr:2022}.
}

In the next section, we show how the LDI viewpoint allows for a modification when $x'$ is fixed to a known trajectory.

\section{The Mixed Jacobian Linear Inclusion for Differentiable Mappings} \label{sec:newLDI}

In this section, we build a new linear inclusion that characterizes the behavior of a general differentiable map $f:\R^n\to\R^m$ when comparing an arbitrary $x\in\R^n$ to a fixed $x'\in\R^n$. As with the linear inclusion from Proposition~\ref{prop:J_LDI}, our inclusion is built using first-derivatives of the function $f$.

\subsection{The Mixed Jacobian for the New Linear Inclusion}

In order to define the linear inclusion, we first define a new differential operator constructing a matrix with a particular structure in its partial derivative evaluations.

\begin{definition}[Mixed Jacobian matrix] \label{def:mixed_Jacobian}
Given $x'\in\R^n$, define the \emph{mixed Jacobian operator} $M_{x'}$, such that for differentiable $f:\R^n\to\R^m$, $M_{x'}f : \R^n\times[0,1]^n\to\R^{m\times n}$ where
\begin{align*}
    (&M_{x'}f (x, s))_{ij}  \\
    &= \frac{\partial f_i}{\partial x_j} (x_1,\dots,x_{j-1},s_jx_j + (1-s_j)x'_j,x'_{j+1},\dots,x'_n).
\end{align*}
The matrix $M_{x'}f(x,s)$ is called the \emph{mixed Jacobian matrix} of $f$ at $(x,s)$, since it mixes the inputs to the Jacobian between the point $x'$ and $x$. 
\end{definition}

In the following Theorem, we present the first contribution of this work: a new linear inclusion bounding the behavior of a differentiable map $f$.
As seen in its proof, the set of mixed Jacobian matrices between $x'$ and $x$ characterizes the partial derivatives along an elementwise path between them.

\begin{thm}
  \label{thm:mixed}
  Let $f:\mathbb{R}^n\to\mathbb{R}^m$ be differentiable, 
  $X\subseteq\R^n$, and consider some fixed $x'\in X$.
  If $\calM \subseteq\R^{m\times n}$ satisfies
  \begin{align*}
       M_{x'}f(X,[0,1]^n) \subseteq \calM,
  \end{align*}
  then
  \begin{align}
    \label{eq:M_LDI}
        f(x)-f(x')\in \clco(\calM)(x-x')
  \end{align}
  for every $x\in X$.
\end{thm}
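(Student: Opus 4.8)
The plan is to mimic the proof of Proposition~\ref{prop:J_LDI} but replace the straight-line curve $\gamma(s) = sx + (1-s)x'$ with an elementwise (coordinate-by-coordinate) path from $x'$ to $x$, built as a concatenation of $n$ axis-aligned segments. Define, for $j = 0, 1, \dots, n$, the point $p^{(j)} := (x_1,\dots,x_j,x'_{j+1},\dots,x'_n)$, so $p^{(0)} = x'$ and $p^{(n)} = x$, and $p^{(j)}$ differs from $p^{(j-1)}$ only in the $j$th coordinate. Fix $\ell \in \R^m$ and $x \in X$. The key identity is the telescoping sum
\begin{align*}
    \ell^T\big(f(x) - f(x')\big) = \sum_{j=1}^n \ell^T\big(f(p^{(j)}) - f(p^{(j-1)})\big).
\end{align*}
On each term, only the $j$th coordinate changes, so I apply the one-dimensional mean value theorem to the scalar function $t \mapsto \ell^T f(p^{(j-1)} + t\, e_j (x_j - x'_j))$ on $[0,1]$: there exists $s_j \in (0,1)$ with
\begin{align*}
    \ell^T\big(f(p^{(j)}) - f(p^{(j-1)})\big) = \left(\sum_{i=1}^m \ell_i \frac{\partial f_i}{\partial x_j}(q^{(j)})\right)(x_j - x'_j),
\end{align*}
where $q^{(j)} = (x_1,\dots,x_{j-1}, s_j x_j + (1-s_j)x'_j, x'_{j+1},\dots,x'_n)$. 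By Definition~\ref{def:mixed_Jacobian}, the bracketed coefficient is exactly $\ell^T (M_{x'}f(x,s))_{:,j}$ for the point $s = (s_1,\dots,s_n) \in [0,1]^n$ assembled from the $n$ applications of the MVT. Summing over $j$ gives $\ell^T(f(x) - f(x')) = \ell^T M_{x'}f(x,s)(x - x')$.

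Since $x \in X$ forces each $q^{(j)}$ to have the structure required by the mixed Jacobian (note: each intermediate coordinate of $q^{(j)}$ is either $x_k$, $x'_k$, or a convex combination in the $j$th slot, so $M_{x'}f(x,s) \in M_{x'}f(X,[0,1]^n) \subseteq \calM$ — this is the one place I must check the argument actually lands in the set $\calM$, and here $X$ need not be convex, unlike in Proposition~\ref{prop:J_LDI}). Therefore $\ell^T(f(x) - f(x')) \le \sup_{M \in \clco(\calM)} \ell^T M (x - x')$. As $\ell$ is arbitrary, $f(x) - f(x')$ lies in every closed halfspace containing the closed convex set $\clco(\calM)(x - x')$, and since a closed convex set equals the intersection of the halfspaces containing it, we conclude $f(x) - f(x') \in \clco(\calM)(x-x')$, which is~\eqref{eq:M_LDI}.

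The main obstacle — really the only subtle point — is bookkeeping the coordinates carefully enough to confirm that the point at which the $j$th partial derivative is evaluated is precisely of the form prescribed by Definition~\ref{def:mixed_Jacobian}, namely $(x_1,\dots,x_{j-1}, s_j x_j + (1-s_j)x'_j, x'_{j+1},\dots,x'_n)$, and that a single $s \in [0,1]^n$ can be chosen to realize all $n$ columns simultaneously (it can, since the $j$th MVT only constrains $s_j$). A secondary point worth a sentence is that $M_{x'}f(x,s) \in \calM$ requires $x \in X$ but no convexity of $X$: each intermediate point $q^{(j)}$ has all coordinates drawn from $\{x_k\} \cup \{x'_k\}$ except the $j$th, which ranges over the segment $[x'_j, x_j]$ — exactly what is swept out by letting $(x,s)$ range over $X \times [0,1]^n$, so the containment is immediate from the hypothesis. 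The convexity/continuity of $\mu$ and the Grönwall step are not needed here; they enter only when this inclusion is later combined with Lemma~\ref{lem:LDI_ect}.
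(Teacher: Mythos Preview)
Your proposal is correct and follows essentially the same approach as the paper: both construct the coordinate-wise path from $x'$ to $x$, apply the mean value theorem on each of the $n$ axis-aligned segments to produce the mixed Jacobian evaluation $M_{x'}f(x,s)$, and finish with the identical halfspace/separation argument. The only cosmetic difference is that the paper parameterizes each segment as a curve $\gamma_k$ and applies the vector mean value theorem before noting that all but one column of $Df(\gamma_k(s_k))$ drops out, whereas you go directly to the one-dimensional mean value theorem on the $j$th coordinate; your observation that convexity of $X$ is not needed here (in contrast to Proposition~\ref{prop:J_LDI}) is correct and worth keeping.
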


\begin{proof}
    Fix $\ell\in\R^m$ and $x\in X$. For each $k=1,\dots,n$, consider the curve $\gamma_k:[0,1]\to\R^n$,
    \begin{align*}
        \gamma_k(s) = [x_1\cdots x_{k-1}\ sx_k + (1-s)x'_k\ x'_{k+1}\cdots x'_n]^T.
    \end{align*}
    Each curve $\gamma_k$ is continuous on $[0,1]$ and differentiable on $(0,1)$, thus using the mean value theorem there exists $s_k\in (0,1)$ such that
    \begin{align*}
        \ell^T&(f(\gamma_k(1)) - f(\gamma_k(0))) = \ell^T(Df(\gamma(s_k))(\gamma_k(1) - \gamma_k(0))) \\
        &= \textstyle\ell^T\sum_{j=1}^n\frac{\partial f}{\partial x_j}(\gamma_k(s_k))((\gamma_k(1))_j - (\gamma_k(0))_j).
    \end{align*}
    Note that $\gamma_k(1) = [x_1\cdots x_k\ x'_{k+1}\cdots x'_n]^T = \gamma_{k+1}(0)$ for every $k=1,\dots,n-1$. Thus, summing over $k=1,\dots,n$, the LHS is telescoping. Swapping the order of summation on the RHS, we see that
    \begin{align}
        \ell^T&(f(\gamma_n(1)) - f(\gamma_1(0))) = \ell^T(f(x) - f(x')) \nonumber \\ 
        &= \ell^T \textstyle\sum_{j=1}^n \sum_{k=1}^n \frac{\partial f}{\partial x_j}(\gamma_k(s_k))((\gamma_k(1))_j - (\gamma_k(0))_j) \nonumber \\
        &= \textstyle\ell^T\sum_{j=1}^n \frac{\partial f}{\partial x_j}(\gamma_j(s_j))(x_j - x'_j), \label{eq:pf:taylor}
    \end{align}
    where~\eqref{eq:pf:taylor} follows since $(\gamma_k(1))_j - (\gamma_k(0))_j = x_j - x_j = 0$ if $j\leq k-1$, $(\gamma_k(1))_j - (\gamma_k(0))_j = x'_j - x'_j = 0$ if $j\geq k+1$, and $(\gamma_k(1))_j - (\gamma_k(0))_j = x_j - x'_j$ when $j=k$. 
    Finally, since $\gamma_j(s_j) = [x_1\cdots x_{j-1}\ s_jx_j + (1-s_j)x'_j\ x'_{j+1}\cdots x'_n]$,
    \begin{align*}
        \ell^T(f(x) - f(x')) &= \ell^T \textstyle\sum_{j=1}^n (M_{x'}f(x,s))_{:,j}(x_j - x'_j) \\
        &\leq \textstyle\sup_{M\in \clco(\calM)} \ell^T M(x - x'),
    \end{align*}
    since $M_{x'}f(x,s)\in \calM$. 
    Thus, $f(x) - f(x')$ belongs to every halfspace containing $\clco(\calM)(x - x')$, since $\ell$ was arbitrary.
    But since $\clco(\calM)(x - x')$ is closed and convex, it equals the intersection of these halfspaces, leading to~\eqref{eq:M_LDI}.
\end{proof}

The key feature of the proof of Theorem~\ref{thm:mixed} is the particular path constructed from $x'$ to $x$.
Instead of traversing the straight line segment as in Proposition~\ref{prop:J_LDI}, we construct a path which only changes along one coordinate at a time (see Figure~\ref{fig:gamma_vis} for a visualization).
Repeated application of the mean value theorem on these $n$ different segments builds the vector $s\in[0,1]^n$, characterizing $n$ different points along the elementwise path. The mixed Jacobian matrix $M_{x'}f(x,s)$ carries the corresponding Jacobian column at each of these $n$ points.

\subsection{Interval Overapproximations of the (Mixed) Jacobian}

In either Proposition~\ref{prop:J_LDI} or Theorem~\ref{thm:mixed}, a natural question is how to build a matrix set $\calJ$ and $\calM$ satisfying the stated assumptions. 
Analytically, it may be possible to write, in closed form, the true image of the Jacobian operator $Df(X)$ or the mixed Jacobian operator $M_{x'}(X,[0,1]^n)$. 
This approach is used in the next section for Examples~\ref{ex:poly} and~\ref{ex:l1}. 

For automated analysis, a closed form expression for the true images may be difficult to derive.
Instead, a more tractable approach may be to construct a set overapproximating the true image.
For instance, the approach developed in~\cite{FanMitra:2016,FanMitra:2017} uses interval analysis~\cite{Jaulin:2001} to obtain an interval matrix $[\calJ]$ overapproximating the set $Df(X)$. 
Interval analysis propagates interval overapproximations through functional building blocks to automatically bound each entry of the Jacobian matrix into their own intervals as $\frac{\partial f_i}{\partial x_j}(X) \subseteq [\calJ]_{ij}$.

The following Corollary shows how to build an interval matrix $[\calM]$ containing the mixed Jacobian matrices, which coincides with the mixed Jacobian interval matrix from~\cite{Hanson:1968,Jaulin:2001}. 
\change{In the mixed Jacobian case, for an interval set $X$ and $x'\in X$, the line segments along the element-wise path live in strict subsets of $X$. For instance, the entries in the first column of $M_{x'}f(x,s)$ only need to bound on input set $X_1\times \{x'_2\} \times \cdots \times \{x'_n\}$, since the corresponding line segment lives on this subset (see Figure~\ref{fig:gamma_vis}).}
Further,  the smallest interval matrix $[\calM]$ is \change{never larger} than the smallest interval matrix $[\calJ]$ on an interval initial set.%

\begin{corollary}[Interval approximations] \label{cor:intervalM}
    Let $f:\R^n\to\R^m$ be differentiable, $X=X_1\times \cdots \times X_n\in\IR^n$ be an interval, and consider some fixed $x'\in X$. 
    The following statements hold:
    \begin{enumerate}[i)]
        \item An interval matrix $[\calJ]$ satisfies $Df(X)\subseteq[\calJ]$ if for every $i=1,\dots,m$ and $j=1,\dots,n$,
        \begin{align} \label{eq:intervalJcond}
            \frac{\partial f_i}{\partial x_j} (X_1,\dots,X_j,X_{j+1},\dots,X_n) \subseteq [\calJ]_{ij};
        \end{align}
        \item An interval matrix $[\calM]$ satisfies $M_{x'}f(X,[0,1]^n)\subseteq[\calM]$ if for every $i=1,\dots,m$ and $j=1,\dots,n$,
        \begin{align} \label{eq:intervalMcond}
            \frac{\partial f_i}{\partial x_j}(X_1,\dots,X_j,x'_{j+1},\dots,x'_n) \subseteq [\calM]_{ij}.
        \end{align}
    \end{enumerate}
    Moreover, the smallest interval matrices $[\calJ]$ and $[\calM]$ satisfying~\eqref{eq:intervalJcond} and~\eqref{eq:intervalMcond} respectively also satisfy $[\calM] \subseteq [\calJ]$.

\end{corollary}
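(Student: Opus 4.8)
The plan is to verify the three assertions in turn, each by a direct entrywise argument, using that interval matrices are defined coordinate-by-coordinate and that each factor $X_j$ of the box $X = X_1 \times \cdots \times X_n$ is a convex interval. For claim i), I would take an arbitrary $A \in Df(X)$, write $A = Df(x)$ with $x \in X$, and note that $A_{ij} = \frac{\partial f_i}{\partial x_j}(x) \in \frac{\partial f_i}{\partial x_j}(X) \subseteq [\calJ]_{ij}$ because $x \in X_1 \times \cdots \times X_n$; hence $A \in [\calJ]$ entrywise. (The product notation on the left of~\eqref{eq:intervalJcond} is just $\frac{\partial f_i}{\partial x_j}(X)$, written that way only to parallel~\eqref{eq:intervalMcond}.)

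For claim ii) — the only step that requires an actual observation — I would take $A \in M_{x'}f(X,[0,1]^n)$, so $A = M_{x'}f(x,s)$ with $x \in X$ and $s \in [0,1]^n$, and recall from Definition~\ref{def:mixed_Jacobian} that $A_{ij} = \frac{\partial f_i}{\partial x_j}(x_1, \dots, x_{j-1}, s_j x_j + (1-s_j) x'_j, x'_{j+1}, \dots, x'_n)$. The key point is that this evaluation point lies in $X_1 \times \cdots \times X_j \times \{x'_{j+1}\} \times \cdots \times \{x'_n\}$: its first $j-1$ coordinates are in $X_1, \dots, X_{j-1}$ since $x \in X$; its $j$-th coordinate is a convex combination of $x_j \in X_j$ and $x'_j \in X_j$, hence lies in $X_j$ by convexity of the interval $X_j$; and its remaining coordinates are exactly $x'_{j+1}, \dots, x'_n$. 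Consequently $A_{ij}$ belongs to the left-hand side of~\eqref{eq:intervalMcond}, hence to $[\calM]_{ij}$, so $A \in [\calM]$. This convexity observation is precisely the interval-box analogue of the convexity-of-$X$ hypothesis used in Proposition~\ref{prop:J_LDI}.

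For claim iii), I would invoke the elementary fact (immediate from the smallest-enclosing-interval construction recalled in the Notation section) that if $S \subseteq T \subseteq \R$ then the smallest interval containing $S$ is contained in the smallest interval containing $T$. The smallest $[\calJ]$ and $[\calM]$ satisfying~\eqref{eq:intervalJcond} and~\eqref{eq:intervalMcond} are, entrywise, the smallest intervals containing $\frac{\partial f_i}{\partial x_j}(X)$ and $\frac{\partial f_i}{\partial x_j}(X_1, \dots, X_j, x'_{j+1}, \dots, x'_n)$ respectively; since $x'_k \in X_k$ for each $k > j$, the box $X_1 \times \cdots \times X_j \times \{x'_{j+1}\} \times \cdots \times \{x'_n\}$ is a subset of $X$, so the second derivative image is a subset of the first, and passing to smallest enclosing intervals yields $[\calM]_{ij} \subseteq [\calJ]_{ij}$ for all $i,j$, i.e. $[\calM] \subseteq [\calJ]$. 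There is no substantive obstacle here: the proof is bookkeeping with the entrywise definition of interval matrices, the one mild subtlety being the convexity step in claim ii) that identifies the strict sub-box on which the $j$-th column of the mixed Jacobian must be bounded.
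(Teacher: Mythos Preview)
Your proposal is correct and follows essentially the same approach as the paper: both treat (i) as immediate from the entrywise definition, both prove (ii) by observing that the mixed Jacobian evaluation point lies in the sub-box $X_1\times\cdots\times X_j\times\{x'_{j+1}\}\times\cdots\times\{x'_n\}$ via convexity of $X_j$, and both deduce (iii) from the containment of this sub-box in $X$ (the paper phrases it as ``any $[\calJ]$ satisfying~\eqref{eq:intervalJcond} also satisfies~\eqref{eq:intervalMcond}'', which is equivalent to your smallest-enclosing-interval monotonicity argument).
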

\begin{proof}
    The statement (i) is clear, as interval matrix inclusion is elementwise. 
    Regarding statement (ii), since $X$ is an interval, any $x\in X$ and $s\in [0,1]^n$ satisfies $[x_1 \cdots x_{j-1}\ (1-s_j)x_j + s_jx'_j\ x'_{j+1} \cdots x'_n]^T \in \change{X_1\times \cdots \times X_j \times \{x'_{j+1}\}\times \cdots\times\{x'_n\}}$ for $j=1,\dots,n$. Thus, $[\calM]$ satisfies $M_{x'}\change{f}(X,[0,1]^n)\subseteq[\calM]$.
    Lastly, \change{any interval matrix $[\calJ]$ satisfying~\eqref{eq:intervalJcond}} clearly satisfies~\eqref{eq:intervalMcond} since $x'_j\in X_j$ for every $j$, so if $[\calM]$ is the smallest interval matrix satisfying~\eqref{eq:intervalMcond}, then $[\calM] \subseteq [\calJ]$.
\end{proof}

\begin{rem}[Connection to interval analysis literature]
    Our bound in Theorem~\ref{thm:mixed} is inspired by known results in the interval analysis literature, and Corollary~\ref{cor:intervalM} is essentially equivalent to the result from~\cite{Hanson:1968}. 
    The focus in~\cite{Hanson:1968} is in finding solutions to the system of equations $f(x) = 0$, rather than analyzing the nonlinear dynamical system $\dot{x} = f(x)$.
    The interval matrix $[\calM]$ has also been used to construct interval inclusion functions in~\cite{Jaulin:2001} for robustness analysis of the map $f$.
\end{rem}

\begin{rem}[Interval overapproximations] \label{rem:interval_over}
    More generally, $[\calM]\subseteq[\calJ]$ in Corollary~\ref{cor:intervalM} whenever an \emph{inclusion monotonic inclusion function}~\cite[Section 2.4]{Jaulin:2001} of $\frac{\partial f_i}{\partial x_j}$ is used to obtain the entrywise bounds from~\eqref{eq:intervalJcond} and~\eqref{eq:intervalMcond}.
    \change{In particular, \texttt{immrax}~\cite{immrax} automatically builds such inclusion functions yielding interval matrices satisfying \eqref{eq:intervalJcond} and \eqref{eq:intervalMcond} through automatic differentiation and composition of inclusion monotonic building blocks.}
\end{rem}

\section{Contraction Analysis to Known Trajectories} \label{sec:nlsyscontr}

In this section, we apply the inclusion from Theorem~\ref{thm:mixed} \change{to study exponential stability} of nonlinear systems to known trajectories \change{using tools from contraction theory}.
Consider the following time-varying nonlinear system
\begin{align}\label{eq:nlsys}
    \dot{x} = f(t,x) = f_t(x), \quad x(t_0) = x_0,
\end{align}
where $f:\R\times \R^n\to\R^n$ is continuous, with continuous partial derivatives with respect to $x$, \emph{i.e.}, $(t,x) \mapsto \frac{\partial f}{\partial x}(t,x)$ is continuous. Let $f_t:\R^n\to\R^n$ denote $x\mapsto f_t(x) = f(t,x)$.

\subsection{Contraction to a Known Trajectory}

Mirroring the analysis from Section~\ref{subsec:LDI_stability}, the following Theorem uses a uniform bound of the logarithmic norm of the mixed Jacobian matrices to construct \change{a UES} bound between arbitrary trajectories $x$ and the particular known trajectory $x'$. 

\begin{thm} \label{thm:main}
    Let $|\cdot|$ be a norm with induced logarithmic norm $\mu$, and let $X\subseteq\R^n$ be a set. Consider the dynamical system $\dot{x} = f_t(x)$ from~\eqref{eq:nlsys}.
    Let $x'$ be a known trajectory \change{in $X$} defined on \change{$J$}. If for some $c\in\R$,
    \begin{align*}
        \sup_{t\in J,\,x\in X,\,s\in[0,1]^n} \mu(M_{x'(t)} f_t (x,s)) \leq c,
    \end{align*}
    then the system is \change{uniformly exponentially stable to $x'$ at rate $c$ in $X$}, \emph{i.e.}, for any trajectory $x$ in $X$ defined on \change{$[t_0,T]\subseteq J$,}
    \begin{align*}
        |x(t) - x'(t)| \leq e^{c(t - t_0)} |x(t_0) - x'(t_0)|,
    \end{align*}
    for every \change{$t\in[t_0,T]$}.
\end{thm}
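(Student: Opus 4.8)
The plan is to combine the mixed-Jacobian inclusion of Theorem~\ref{thm:mixed} with the LDI stability estimate of Lemma~\ref{lem:LDI_ect}, exactly mirroring the classical argument recalled in Section~\ref{subsec:LDI_stability} but with the element-wise path replacing the straight-line path. Fix $t_0$ and a trajectory $x$ in $X$ on $[t_0,T]\subseteq J$, and set $\varepsilon(t) = x(t) - x'(t)$. For each fixed $t$, apply Theorem~\ref{thm:mixed} to the map $f_t$ with the fixed point $x'(t)\in X$: with $\calM_t := M_{x'(t)}f_t(X,[0,1]^n)$ we get $f_t(x(t)) - f_t(x'(t)) \in \clco(\calM_t)\,\varepsilon(t)$, hence $\dot\varepsilon(t) = f_t(x(t)) - f_t(x'(t)) \in \clco(\calM_t)\,\varepsilon(t)$. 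So $\varepsilon$ is a trajectory of the (time-varying) LDI $\dot\varepsilon \in \clco(\calM_t)\varepsilon$.

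The second step is to pass the logarithmic-norm bound from $\calM_t$ to its closed convex hull. The hypothesis gives $\mu(M) \le c$ for every $M\in\calM_t$ and every $t$. Since $\mu$ is a convex function (it is a sup of linear functionals, being the right derivative of the induced matrix norm, which is convex and positively homogeneous — equivalently $\mu(M) = \lim_{h\searrow0}(\|I+hM\|-1)/h$ is convex in $M$), the sublevel set $\{M : \mu(M)\le c\}$ is convex, so it contains $\co(\calM_t)$; and since $\mu$ is continuous (it is finite and convex on all of $\R^{n\times n}$, hence Lipschitz on bounded sets, and in any case $1$-Lipschitz with respect to the induced norm: $|\mu(A)-\mu(B)|\le\|A-B\|$), the sublevel set is closed, so it contains $\clco(\calM_t)$. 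Therefore $\mu(M)\le c$ for all $M\in\clco(\calM_t)$, uniformly in $t\in[t_0,T]$.

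The final step applies Lemma~\ref{lem:LDI_ect}. Strictly speaking Lemma~\ref{lem:LDI_ect} as stated is for a single fixed matrix set $\Omega$, but its proof only uses that at each $t$ there is some $M(t)$ in the relevant set with $\dot x(t) = M(t)x(t)$ and $\mu(M(t))\le c$; here we take the set to be $\bigcup_{t\in[t_0,T]}\clco(\calM_t)$, which is uniformly $\mu$-bounded by $c$ by the previous paragraph, and $\varepsilon$ is a trajectory of $\dot\varepsilon\in\Omega\varepsilon$ for this $\Omega$. The Dini-derivative computation then gives $D^+|\varepsilon(t)| \le c\,|\varepsilon(t)|$ on $[t_0,T]$, and Grönwall yields $|\varepsilon(t)| \le e^{c(t-t_0)}|\varepsilon(t_0)|$, which is the claimed UES bound.

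The only genuine subtlety — the "hard part," though it is mild — is the interchange of "for all $t$" quantifiers: we need the bound $D^+|\varepsilon(t)|\le c|\varepsilon(t)|$ to hold pointwise in $t$ with a $c$ independent of $t$, which is exactly why the hypothesis takes the supremum over $t\in J$ as well as over $x\in X$ and $s\in[0,1]^n$. One should also note that the measurable/continuous selection $M(t)$ underlying the Dini estimate is not actually needed: for each individual $t$ one picks $M(t)\in\clco(\calM_t)$ witnessing $\dot\varepsilon(t)\in\clco(\calM_t)\varepsilon(t)$, and the estimate $D^+|\varepsilon(t)|\le\mu(M(t))|\varepsilon(t)|\le c|\varepsilon(t)|$ is a pointwise statement, after which Grönwall (as invoked in the proof of Lemma~\ref{lem:LDI_ect}) closes the argument. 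Convexity of $X$ is not required, since Theorem~\ref{thm:mixed} does not assume it.
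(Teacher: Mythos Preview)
Your proposal is correct and follows essentially the same route as the paper's proof: define $\calM_t = M_{x'(t)}f_t(X,[0,1]^n)$, apply Theorem~\ref{thm:mixed} to obtain the time-varying LDI $\dot\varepsilon(t)\in\clco(\calM_t)\varepsilon(t)$, use convexity and continuity of $\mu$ to pass the bound to $\clco(\calM_t)$, then collect the sets into a single $\Omega$ (the paper uses $\bigcup_{\tau\in[t_0,t]}\clco(\calM_\tau)$ for a fixed $t$ and later lets $t$ vary, you use $\bigcup_{t\in[t_0,T]}\clco(\calM_t)$ directly) and invoke Lemma~\ref{lem:LDI_ect}. Your added remarks on why no measurable selection is needed and why convexity of $X$ is not required are accurate and helpful.
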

\begin{proof}
    For every \change{$t\in[t_0,T]$}, set $\calM_t := M_{x'(t)}f_t(X,[0,1]^n)$. Letting $\varepsilon = x - x'$, we observe that for any \change{$t\in[t_0,T]$,}
    \begin{align*}
        \dot{\varepsilon}(t) = \dot{x}(t) - \dot{x}'(t) = f_t(x(t)) - f_t(x'(t)),
    \end{align*}
    since $x$ and $x'$ are trajectories. Applying Theorem~\ref{thm:mixed} to the map $f_t:\R^n\to\R^n$, since $x(t),x'(t)\in X$, for every \change{$t\in[t_0,T]$,}
    \begin{align} \label{eq:pf:epsM_t}
        \dot{\varepsilon}(t) \in \clco(\calM_t) \varepsilon(t).
    \end{align}
    Fix \change{$t\in[t_0,T]$}. Equation~\eqref{eq:pf:epsM_t} implies that for every $\tau\in[t_0,t]$,
    \begin{align*}
        \dot{\varepsilon}(\tau) \in \underbrace{\left(\textstyle\bigcup_{\tau'\in[t_0,t]} \clco(\calM_{\tau'})\right)}_{=:\calN}\varepsilon(\tau).
    \end{align*}
    \textit{Claim:} $\sup_{\olM\in\calN} \mu(\olM) \leq c$. 
    Indeed, fix $\olM\in\calN$; 
    there exists $\tau\in[t_0,t]$ such that $\olM \in \clco(\calM_\tau)$.
    Let $\{M_j\}_{j=1}^\infty$ be a sequence satisfying $M_j\in\co(\calM_\tau)$ and $M_j \to \olM$.
    Since $\mu$ is convex~\cite[Lemma 2.11]{Bullo:2023}, $\mu(M_j) \leq c$ for every $j$ since $\sup_{M\in \calM_\tau}\mu(M) \leq c$ by assumption. Since $\mu$ is continuous, 
    \begin{align*}
        \mu(\olM) = \lim_{j\to\infty} \mu(M_j) \leq \sup_{j\geq 1} \mu(M_j) \leq c.
    \end{align*}
    Applying Lemma~\ref{lem:LDI_ect} to the LDI $\dot{\varepsilon} = \calN\varepsilon$, \change{for every $\tau\in[t_0,t]$,}
    \begin{align*}
        |\varepsilon(\tau)| \leq e^{c(\tau-t_0)} |\varepsilon(t_0)|,
    \end{align*}
    But \change{$t\in[t_0,T]$} was arbitrary, completing the proof.
\end{proof}

The next Corollary considers the special case of a nonlinear system where $x'$ is fixed to an equilibrium.

\change{
\begin{corollary}[Equilibrium exponential stability] \label{cor:exp_stable}
    Consider the dynamical system $\dot{x} = f_t(x)$ from~\eqref{eq:nlsys}. Let $x'\in\R^n$ such that $f_t(x') = 0$ for every $t\geq 0$, $X\subseteq\R^n$ such that $x'\in \operatorname{int}(X)$.
    If there exists $c < 0$ satisfying
    \begin{align*}
        \sup_{t\geq0,x\in X,s\in[0,1]^n} \mu(M_{x'}f_t(x,s)) \leq c < 0,
    \end{align*}
    then $x'$ is a locally exponentially stable equilibrium point for which any norm ball $\{x : |x - x'| \leq \gamma\}$ for $\gamma > 0$ contained in $X$ is a forward invariant region of attraction of $x'$.
\end{corollary}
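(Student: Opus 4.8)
The plan is to reduce the statement to an application of Theorem~\ref{thm:main} with the known trajectory taken to be the constant curve $x'(t)\equiv x'$. Since $f_t(x')=0$ for all $t\geq 0$ and $x'\in\operatorname{int}(X)\subseteq X$, this constant curve is indeed a trajectory in $X$ defined on $J=[0,\infty)$, and (by local Lipschitzness of $f$ in $x$, which holds because $\partial f/\partial x$ is continuous) it is the unique solution through $x'$. Under this choice the hypothesis of the Corollary is precisely the hypothesis of Theorem~\ref{thm:main} with $J=[0,\infty)$, so any trajectory $x$ that remains in $X$ on $[t_0,T]$ already satisfies $|x(t)-x'|\leq e^{c(t-t_0)}|x(t_0)-x'|$, which tends to $0$ because $c<0$. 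The entire remaining task is therefore to establish forward invariance: a trajectory starting in a norm ball $B:=\{x:|x-x'|\leq\gamma\}\subseteq X$ must stay in $B$, so that the estimate of Theorem~\ref{thm:main} applies on the whole forward time axis.

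For forward invariance I would use a first-exit-time argument. Fix $x_0\in B$ and $t_0\geq 0$; the case $x_0=x'$ is immediate, so assume $0<|x_0-x'|\leq\gamma$, and let $[t_0,T_{\max})$ be the maximal interval of existence. Set $\beta:=\sup\{\tau\in[t_0,T_{\max}):x(t)\in B\text{ for all }t\in[t_0,\tau]\}\geq t_0$. For every $\tau<\beta$ the restriction of $x$ to $[t_0,\tau]$ is a trajectory in $B\subseteq X$, so Theorem~\ref{thm:main} gives $|x(t)-x'|\leq e^{c(t-t_0)}|x_0-x'|$ on $[t_0,\tau]$; letting $\tau\to\beta$ and using continuity and closedness of $B$, this bound holds on $[t_0,\beta]$. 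If $\beta<T_{\max}$, then — provided $\beta>t_0$ — we get $|x(\beta)-x'|\leq e^{c(\beta-t_0)}|x_0-x'|<\gamma$, so $x(\beta)\in\operatorname{int}(B)$; by continuity $x$ stays in $B$ slightly beyond $\beta$, contradicting the definition of $\beta$. Hence $\beta=T_{\max}$, i.e. $x(t)\in B$ for all $t\in[t_0,T_{\max})$. Since $B$ is compact, the solution cannot escape to infinity in finite time, so $T_{\max}=\infty$; applying Theorem~\ref{thm:main} on $[t_0,T]$ for arbitrary $T$ then yields forward invariance of $B$ together with $|x(t)-x'|\leq e^{c(t-t_0)}|x_0-x'|\to 0$. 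Finally, since $x'\in\operatorname{int}(X)$, some such ball $B$ exists, which gives local exponential stability of $x'$ and identifies every $B\subseteq X$ as a forward invariant region of attraction.

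The one delicate point — and the main obstacle — is the step ``provided $\beta>t_0$'': if $x_0$ lies on the boundary of $B$ (possibly also on $\partial X$), the exit-time argument is vacuous on the degenerate interval $[t_0,t_0]$, and one cannot directly apply Theorem~\ref{thm:main} on a subinterval. To handle this I would argue directly at the initial time: $x$ is differentiable from the right at $t_0$ with $\dot{x}(t_0^+)=f_{t_0}(x_0)$, and since $x_0,x'\in X$, Theorem~\ref{thm:mixed} (together with the convexity and continuity of $\mu$, exactly as in the proof of Theorem~\ref{thm:main}) gives $f_{t_0}(x_0)-f_{t_0}(x')=M(x_0-x')$ for some $M$ with $\mu(M)\leq c$; using $f_{t_0}(x')=0$ and the definition of the logarithmic norm, the upper right Dini derivative of $t\mapsto|x(t)-x'|$ at $t_0$ is at most $c|x_0-x'|<0$, so $|x(t)-x'|<\gamma$ for $t$ slightly larger than $t_0$ and hence $\beta>t_0$. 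An alternative and arguably cleaner packaging is to run the Grönwall/Dini comparison directly on $|x(t)-x'|$ for as long as $x(t)\in X$ (which needs $x(t)\in X$ only at the current instant), simultaneously obtaining the decay estimate and this boundary step; I would most likely present the proof that way so as not to invoke Theorem~\ref{thm:main} on subintervals at all.
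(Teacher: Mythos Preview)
Your proof is correct and follows the same approach as the paper: apply Theorem~\ref{thm:main} to the constant trajectory $x'(t)\equiv x'$ and read off forward invariance and attraction of closed norm balls $\{|x-x'|\leq\gamma\}\subseteq X$ from the decay estimate $|x(t)-x'|\leq e^{c(t-t_0)}|x(t_0)-x'|$. The paper's own proof is considerably terser---it simply notes that a trajectory in $X$ starting in $\calS_\gamma$ satisfies $|x(t)-x'|\leq e^{c(t-t_0)}\gamma\leq\gamma$ and concludes---without spelling out the first-exit-time bootstrap or the boundary case $|x_0-x'|=\gamma$ that you handle explicitly; your version is strictly more careful on those points.
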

}
\begin{proof}[Proof]
\change{
    Suppose $x$ is a trajectory in $X$ defined on $[t_0,T]\subseteq[0,\infty]$.
    Calling $x'(t) = x'$ the known trajectory, Theorem~\ref{thm:main} implies $|x(t) - x'| \leq e^{c(t - t_0)}|x(t_0) - x'|$ for $t\in[t_0,T]$.
    Any $\calS_\gamma = \{x : |x - x'| \leq \gamma\} \subseteq X$ for $\gamma>0$ is (i) forward invariant since $x(t_0)\in\calS_\gamma$ implies $|x(t) - x'| \leq e^{c(t - t_0)}|x(t_0) - x'| \leq e^{c(t-t_0)}\gamma \leq \gamma$ for every $t\in[t_0,T]$, (ii) a region of attraction of $x'$ since $|x(t) - x'| \leq e^{c(t - t_0)}\gamma \to 0$ as $t\to \infty$.
    Finally, since $x'\in\operatorname{int}(X)$, there is a $\gamma^*>0$ such that $\calS_{\gamma^*} \subseteq X$.
}
\end{proof}

Corollary~\ref{cor:exp_stable} provides another framing of the contribution of this work: first partial derivatives of the vector field $f$ and tools from contraction analysis verify a norm-based Lyapunov condition for the system, \change{with real regions of attraction given by norm balls contained in the original set of interest}.

\begin{rem}[Comparison to classical contraction]
    Theorem~\ref{thm:main} retains many of the key features of classical contraction analysis, such as the forgetting of initial conditions when strongly contracting ($c<0$). 
    The main drawback of our approach is that $x'$ needs to be known beforehand---considering the set of all possible linearizations $\calJ$ guarantees the existence of $x'$ without knowing it \emph{a priori}.
    When $s = 0$, $M_{x'}f(x,0) = Df(x')$, so the Jacobian at $x'$ is included in the set $\calM$ in Theorem~\ref{thm:main}. 
    Fixing $x'$ is therefore crucial for any benefit from Theorem~\ref{thm:main}, since letting $x'$ vary arbitrarily in $X$ yields the following containment, $M_{X}(X,[0,1]^n) \supseteq Df(X)$.
\end{rem}

\change{
\begin{rem}[Connection to equilibrium contraction]
    \cite[Theorem 33]{DavydovJafarpourBullo_NonEucContr:2022} provides the following sufficient condition for UES to an equilibrium point $x'$: if there is a matrix $A(t,x)$ and $c\in\R$ such that for every $t,x$, $f(t,x) = A(t,x)(x - x')$ and $\mu(A(t,x)) \leq c$, then $x'$ is UES at rate $c$.
    Theorem~\ref{thm:mixed} implies the existence of $A(t,x)\in\clco(M_{x'}f_t(X,[0,1]^n))$ such that $f(t,x) = A(t,x) (x - x')$, and the hypothesis of Corollary~\ref{cor:exp_stable} with convexity of $\mu$ implies that $\mu(A(t,x)) \leq c$.
\end{rem}
}

\begin{rem}[Permuting state variables and varying bases] \label{rem:permuting}
    Reordering the state variables will result in different element-wise paths taken from $x$ to $x'$, generally yielding different mixed Jacobian matrices and potentially different contraction rate guarantees from Theorem~\ref{thm:main}. 
    Generally, for any basis of $\R^n$, a similar path between $x$ and $x'$ is constructed by traversing each basis vector direction individually, which computationally corresponds to the same elementwise path in state transformed coordinates.
\end{rem}

\subsection{Analytical Examples}

In this subsection, we demonstrate the advantage of Theorem~\ref{thm:main} with two analytical examples.

\change{
\begin{example} \label{ex:taninv}
  Consider the time-varying nonlinear system
  \begin{align}
    \label{eq:4}
    \dot{x}=f(t,x)
    =\begin{bmatrix}
        -x_1-\tan^{-1}(x_1)x_2  \change{ + \cos{2\pi t}}\\
        -x_2
    \end{bmatrix}.
  \end{align}
  We will show that the system globally entrains to any known trajectory $x'$ starting from an initial condition satisfying $x'_2(0) = 0$.
  \change{
This implies $x'_2(t) = 0$ for every $t\geq 0$.
We have
\begin{align*}
    Df_t(x) =
    \begin{bmatrix}
      -1-\frac{x_2}{1+x_1^2}&-\tan^{-1}(x_1)\\
      0&-1
    \end{bmatrix}. 
\end{align*}
For any interval $X_1\times X_2\subseteq \mathbb{R}^2$, we have
\begin{align*}
\textstyle\frac{\partial f_1}{\partial x_1}(t,X_1,x'_2(t))&=-1 - \tfrac{0}{1 + x_1^2} = -1,\\
\textstyle\frac{\partial f_1}{\partial x_2}(t,X_1,X_2)&=-\tan^{-1}(X_1) \subseteq [-\pi/2,\pi/2].
\end{align*}
  \change{
  Therefore, for any $t\geq0$ and $x\in\R^n$, $M_{x'(t)} f_t(x,s) \subseteq \co\{M_+,M_-\}$, where $ M_{\pm}= \begin{bsmallmatrix} -1&\pm \frac{\pi}{2}\\ 0&-1 \end{bsmallmatrix}$.
  It is easy to show $M_++M_+^T\preceq 2cI$ and $M_-+M_-^T\preceq 2c I$ for $c= \left(\frac{\pi}{4}-1\right)<0$, proving that for any trajectory $x$, $|x(t) - x'(t)|_2 \leq e^{ct} |x(0) - x'(0)|_2$ for every $t\geq 0$, using Theorem~\ref{thm:main}.
  
 In contrast, since $Df_t(x)$ is triangular and $\frac{\partial f_1}{\partial x_1}(x)=-1-\frac{x_2}{1+x_1^2} \change{= \lambda(x)}$ can be positive, the system cannot be contracting in any norm\change{, since $v=[1\ 0]^T$ results in $\|I + hDf(x)\| \geq |v + hc(x)v| = 1 + h\lambda(x)$, and therefore $\mu(Df(t,x)) \geq \lambda(x)$.}
 }
 }
\end{example}
}

\begin{figure}
    \centering
    \includegraphics[width=0.49\columnwidth]{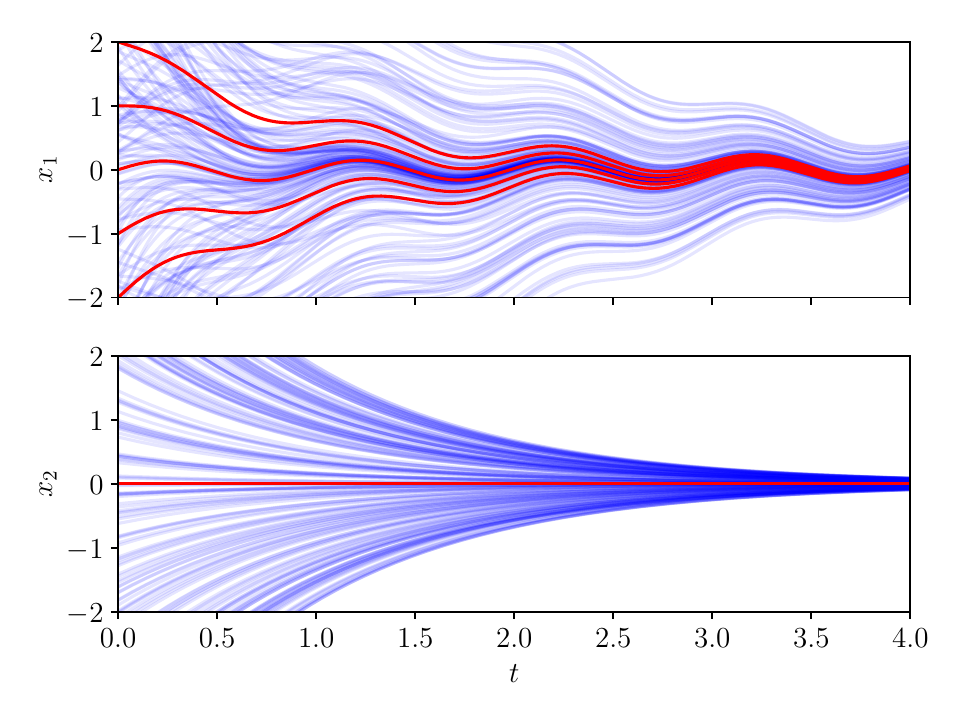}
    \includegraphics[width=0.49\columnwidth]{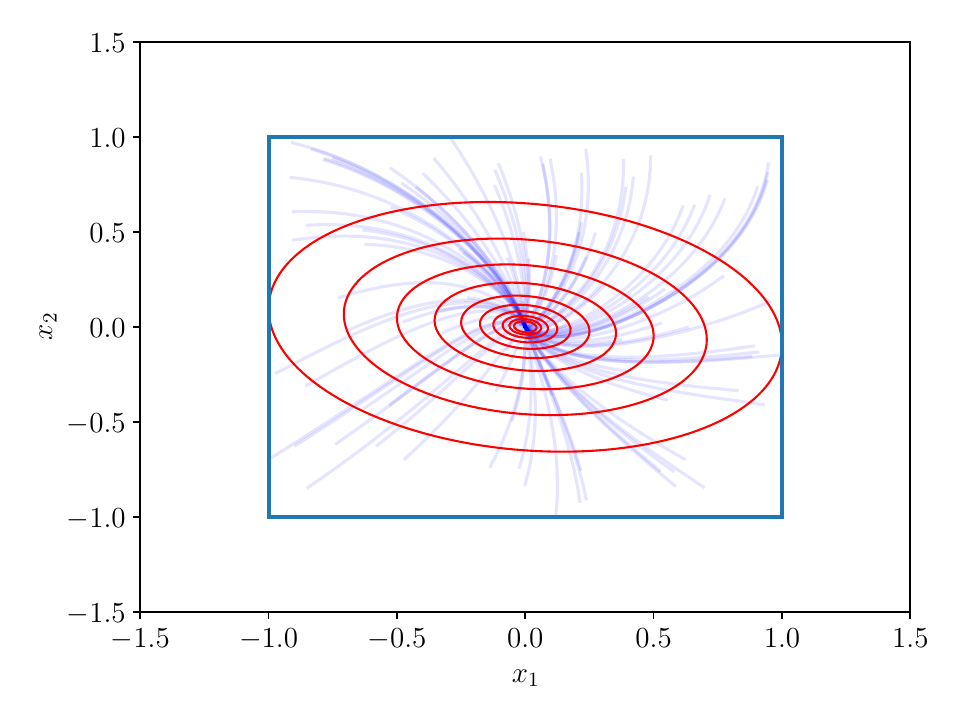}
    \caption{
    \change{\textbf{Left (Example~\ref{ex:taninv})}: Sample trajectories of the system starting in $[-5,5]^2$ are shown in blue. The system entrains with exponential decay rate of $c=\frac{\pi}{4} - 1$ to any known trajectory $x'$ with $x'_2(0) = 0$, which are pictured in red for $x'_1(0)\in\{0,\pm1,\pm2\}$.}
    \textbf{Right (Example~\ref{ex:poly})}: Level sets of the local Lyapunov function $V(x) = x^T P x$, verifying stability with exponential decay rate of \change{$c=-0.45$}, are shown in red. The localizing domain $X = [-1,1]^2$ is pictured in light blue, and sample trajectories starting in $X$ are pictured in blue.
    }
    \label{fig:examples}
\end{figure}

\begin{example} \label{ex:poly}
Consider the time-invariant nonlinear system
\begin{align*}
    \dot{x}=f(x)=\begin{bmatrix}f_1(x)\\f_2(x)\end{bmatrix}=\begin{bmatrix}
    -2x_1 + \frac12 (x_1 + x_2)^2 \\
    -x_1 + \frac12 x_1^2 - 2x_2
\end{bmatrix},
\end{align*}
whose Jacobian matrix is %
\begin{align*}
    Df(x) = \begin{bmatrix}
        -2 + (x_1 + x_2) & 1 + (x_1 + x_2) \\ -1 + x_1 & -2
    \end{bmatrix}.
\end{align*}
We verify the exponential stability of $x' = 0$ and find a region of attraction within $X = [-1,1]^2$. 
The mixed Jacobian is
\begin{align*}
    M_{x'}f(x,s) = \begin{bmatrix}
        -2 + s_1x_1 & 1 + (x_1 + s_2x_2) \\ -1 + s_1x_1 & -2
    \end{bmatrix}.
\end{align*}
First, note that $Df(\smallconc{1}{1}) = \begin{bsmallmatrix} 0 & 3 \\ 0 & -2 \end{bsmallmatrix}$, so $\mu(Df(\smallconc{1}{1})) \geq 0$ for any logarithmic norm $\mu$. 
Noting that $s_1x_1\in[-1,1]$ and $(x_1 + s_2x_2)\in [-2,2]$, $M_{x'}f([-1,1]^2,[0,1]^2)\subseteq\co\{M_1,M_2,M_3,M_4\}$, with
\begin{align*}
    M_1 &= \begin{bsmallmatrix} -3 & 3 \\ -2 & -2 \end{bsmallmatrix}, \quad M_2 = \begin{bsmallmatrix} -3 & -1 \\ -2 & -2 \end{bsmallmatrix}, \\
    M_3 &= \begin{bsmallmatrix} -1 & 3 \\ 0 & -2 \end{bsmallmatrix}, \quad M_4 = \begin{bsmallmatrix} -1 & -1 \\ 0 & -2 \end{bsmallmatrix}.
\end{align*}
Using CVXPY, the following SDP is feasible \change{for $c = -0.45$ and returns $P \approx \begin{bsmallmatrix} 0.425 & 0.093 \\ 0.093 & 0.985 \end{bsmallmatrix}$:}
\begin{align*}
    \change{\max_{P\preceq I}\ \log\det(P)} \ \mathrm{ s.t. } \ & M_i^T P + PM_i \preceq 2cP \ \forall i\in\{1,2,3,4\}.
\end{align*}
Thus, \change{Corollary~\ref{cor:exp_stable} implies that} $x'$ is locally exponentially stable at rate $c = -0.45$, with Lyapunov function $V(x) = |x - x'|_{2,P^{1/2}} = x^T Px$\change{, and any norm ball $\calB^P_\gamma(0)$ for $\gamma>0$ contained in $X$ is a forward invariant region of attraction.}
\end{example}

\subsection[Contraction to Known Trajectories in the l1-Norm]{Contraction to Known Trajectories in the $\ell_1$-Norm}

When comparing to a fixed trajectory $x'$, there are potentially two sources of conservatism when overapproximating the logarithmic norm. 
When using the full set of possible Jacobian matrices $\calJ$, no information regarding the comparison point $x'$ is used, whereas $\calM$ from Theorem~\ref{thm:mixed} uses this information, and in general $Df(x)\notin\calM$.
However, Theorem~\ref{thm:mixed} requires a mean-value theorem application on $n$ different segments. Each column is built from a Jacobian evaluation at a different location, which means that a matrix $M\in\calM$ may not be a Jacobian matrix of the system. 
To summarize, neither $\calM \subseteq \calJ$ nor $\calJ \subseteq \calM$ are generally true, and as a result, neither will necessarily give better contraction estimates.

In the case of the $\ell_1$-norm $|\cdot|_1$, however, the column-wise structure allows us to show that using $\calM$ from Theorem~\ref{thm:main} outperforms the full Jacobian technique that uses $\calJ$.
\begin{thm} \label{thm:ell_1}
    Let $f:\R^n\to\R^n$ be differentiable, 
    $X \subseteq \R^n$ be an interval,
    and let $x'\in X$. Then
    \begin{align*}
        \sup_{x\in X,\,s\in[0,1]^n}\mu_1\left(M_{x'}f(x,s)\right) \leq \sup_{x\in X} \mu_1\left(Df(x)\right).
    \end{align*}
\end{thm}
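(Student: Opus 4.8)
The plan is to exploit the fact that the $\ell_1$ logarithmic norm decouples across columns. Recall the standard formula
\begin{align*}
    \mu_1(A) = \max_{j=1,\dots,n}\Bigl( A_{jj} + \sum_{i\neq j}|A_{ij}|\Bigr),
\end{align*}
so $\mu_1(A)$ is the maximum over the columns of $A$ of a functional that depends on that column alone. The key observation is that each column of $M_{x'}f(x,s)$ is literally a column of an ordinary Jacobian $Df(y)$ for some $y\in X$; hence the per-column quantity for $M_{x'}f(x,s)$ is bounded by $\mu_1(Df(y)) \le \sup_{z\in X}\mu_1(Df(z))$, and taking the column maximum gives the claim pointwise in $(x,s)$.

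Concretely, first I would fix $x\in X$ and $s\in[0,1]^n$, and for each $j=1,\dots,n$ set
\begin{align*}
    y^{(j)} = \bigl(x_1,\dots,x_{j-1},\,s_jx_j+(1-s_j)x'_j,\,x'_{j+1},\dots,x'_n\bigr).
\end{align*}
Because $X = X_1\times\cdots\times X_n$ is an interval, $y^{(j)}\in X$: the coordinates indexed by $k<j$ equal $x_k\in X_k$, the $j$-th coordinate is a convex combination of $x_j,x'_j\in X_j$ (and $X_j$ is an interval), and the coordinates indexed by $k>j$ equal $x'_k\in X_k$. By Definition~\ref{def:mixed_Jacobian}, the $j$-th column of $M_{x'}f(x,s)$ coincides with the $j$-th column of $Df(y^{(j)})$, i.e. $(M_{x'}f(x,s))_{ij} = (Df(y^{(j)}))_{ij}$ for every $i$.

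Then, applying the column formula for $\mu_1$, for each $j$,
\begin{align*}
    (M_{x'}f(x,s))_{jj} + \sum_{i\neq j}\bigl|(M_{x'}f(x,s))_{ij}\bigr|
    = (Df(y^{(j)}))_{jj}+\sum_{i\neq j}\bigl|(Df(y^{(j)}))_{ij}\bigr|
    \le \mu_1\bigl(Df(y^{(j)})\bigr) \le \sup_{z\in X}\mu_1\bigl(Df(z)\bigr).
\end{align*}
Taking the maximum over $j$ yields $\mu_1(M_{x'}f(x,s)) \le \sup_{z\in X}\mu_1(Df(z))$, and since $x\in X$ and $s\in[0,1]^n$ were arbitrary, taking the supremum on the left-hand side finishes the proof.

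The argument is essentially bookkeeping; the only real content is the combination of (i) the $\ell_1$ matrix measure being a maximum of column functionals, and (ii) each mixed-Jacobian column being a genuine Jacobian column evaluated at a point of $X$ — the latter being exactly where the box hypothesis on $X$ is used and why it cannot be dropped. I expect the only point requiring care is confirming the column-separated expression for $\mu_1$ and checking that the mixed evaluation points $y^{(j)}$ stay inside $X$; no step is tight in a delicate way, and none of the convex-hull or continuity machinery used elsewhere in the paper is needed here.
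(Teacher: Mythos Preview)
Your proof is correct and is essentially the same argument as the paper's: both use the column-separated formula for $\mu_1$ and the observation that the $j$-th column of $M_{x'}f(x,s)$ is the $j$-th column of $Df$ evaluated at the point $y^{(j)}=(x_{1:j-1},s_jx_j+(1-s_j)x'_j,x'_{j+1:n})\in X$. The only cosmetic difference is that the paper swaps the outer $\sup$ with the $\max_j$ first, whereas you work pointwise in $(x,s)$ and take the supremum at the end.
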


\newenvironment{talign}
 {\let\displaystyle\textstyle\align}
 {\endalign}
\newenvironment{talign*}
 {\let\displaystyle\textstyle\csname align*\endcsname}
 {\endalign}

\begin{proof}
    The statement follows by swapping the $\sup$ with the $\max$ from the definition of the $\mu_1$ logarithmic norm~\cite[Table 2.1, p. 27]{Bullo:2023}. 
    Using the shorthand $x_{k:l} = x_k, \dots, x_l$,
    \begin{align*}
        &\sup_{x\in X,\, s\in[0,1]^n} \mu_1(M_{x'}f(x,s)) \\
        & = \sup_{x\in X,\, s\in[0,1]^n} \max_{j=1,\dots,n} \big\{M_{x'}f(x,s)_{jj} + \textstyle\sum_{i\neq j} |M_{x'}f(x,s)_{ij}|\big\} \\
        & = \max_{j=1,\dots,n} \sup_{x\in X} \sup_{z_j\in\co(x_j,x'_j)} \big\{\tfrac{\partial f_j}{\partial x_j} (x_{1:j-1}, z_j, x'_{j+1:n}) \\
        &\quad\quad + \textstyle \sum_{i\neq j} |\tfrac{\partial f_i}{\partial x_j} (x_{1:j-1}, z_j, x'_{j+1:n})|\big\}  \\
        & \leq \max_{j=1,\dots,n} \sup_{x\in X} \big\{\tfrac{\partial f_j}{\partial x_j} (x) + \textstyle\sum_{i\neq j} |\tfrac{\partial f_i}{\partial x_j} (x)|\big\} \\
        & = \sup_{x\in X} \max_{j=1,\dots,n} \big\{\tfrac{\partial f_j}{\partial x_j} (x) + \textstyle\sum_{i\neq j} |\frac{\partial f_i}{\partial x_j} (x) |\big\}
        = \displaystyle \sup_{x\in X} \mu_1 (Df(x)),
    \end{align*}
    where the inequality holds since $(x_{1:j-1},z_j,x'_{j+1:n}) \in X$.
\end{proof}

\change{
\begin{rem}
\change{
The bound from Theorem~\ref{thm:ell_1} also holds for any diagonal weighting of the $\ell_1$-norm, \emph{i.e.}, norms defined as $|\Gamma x|_1$ for some diagonal $\Gamma$ with positive entries.
}
\end{rem}
}

\begin{example} \label{ex:l1}
    Consider the nonlinear system 
    \begin{align*}
        \dot{x} = \begin{bmatrix} f_1(x) \\ f_2(x) \end{bmatrix} = \begin{bmatrix} -x_1 \\ -x_2 - x_1x_2 \end{bmatrix}.
    \end{align*}
    We verify the stability of the equilibrium $x' = 0$ and find a region of attraction.
    The Jacobian matrix is
    \begin{align*}
        Df(x) = \begin{bmatrix} -1 & 0 \\ -x_2 & -1 - x_1 \end{bmatrix},
    \end{align*}
    so $\mu_1(Df(x)) = \max(-1 + |x_2|, -1 - x_1)$, while $\mu_1(M_{x'}f(x,s)) = \max(-1 + |x'_2|, -1 - x_1) = \max(-1, -1 - x_1)$.
    Fix $\delta \in(0,1)$. On the set \change{$X = \{x\in \R^2 : x_1\geq -\delta\}$},
    \begin{align*}
        \sup_{x\in X} \mu_1(Df(x)) = +\infty,
    \end{align*}
    and in particular, $\mu_1(Df(x)) < 0$ only on the set $\{x\in X : |x_2| < 1\}$.
    In contrast, since \change{$x_1\geq -\delta$} on $X$,
    \begin{align*}
        \sup_{x\in X,s\in[0,1]^2} &\mu_1(M_{x'}f(x,s)) \leq \max(-1, -1 + \delta) = -1 + \delta.
    \end{align*}
    $X$ is a forward invariant set by Nagumo's theorem \change{($f(-\delta,x_2) = [\delta,\ (-1 + \delta) x_2]^T$)}.
    Thus, applying Corollary~\ref{cor:exp_stable}, the origin is an exponentially stable equilibrium, \change{with decay rate $c= -1 + \delta$ and region of attraction $X$}. 
\end{example}

\begin{figure}
    \centering
    \includegraphics[width=0.5\columnwidth]{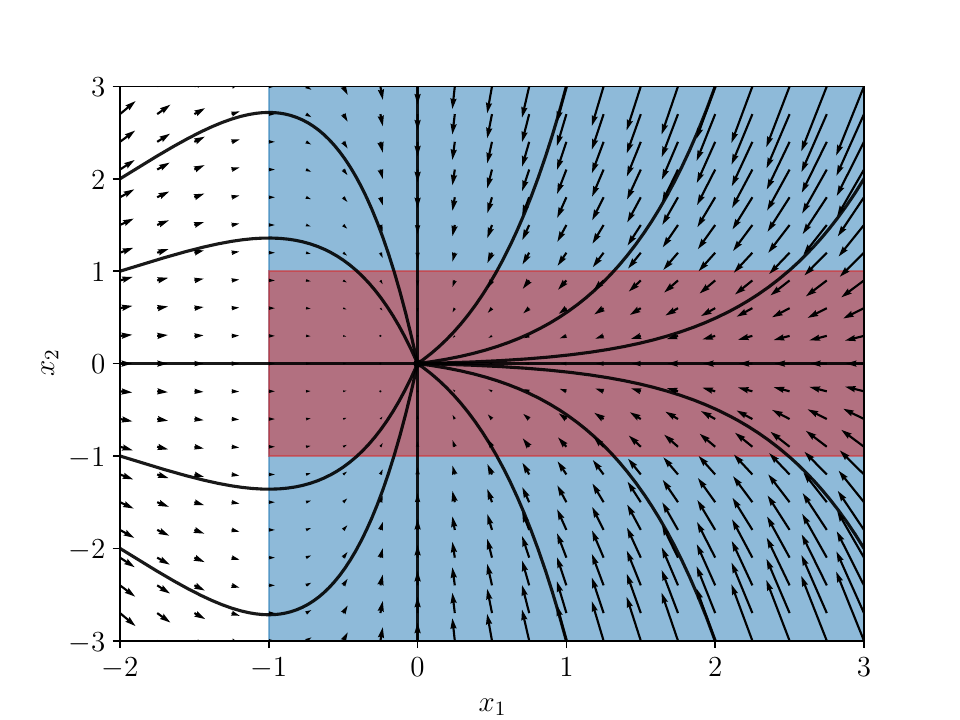}
    \caption{
    The phase portrait from Example~\ref{ex:l1} is pictured. 
    The verified region of attraction $X$ is pictured in blue, and the set where $\mu(Df(x)) < 0$ is pictured in red.
    While the system is not necessarily infinitesimally $\ell_1$-contracting on the blue region $X$, Corollary~\ref{cor:exp_stable} is able to verify that \change{trajectories exponentially decrease their} $\ell_1$-distance to the origin.
    }
    \label{fig:l1}
\end{figure}

\section{Application to Ellipsoidal Reachability} \label{sec:reach}

\change{One} application of contraction analysis is in computing overapproximating reachable sets in nonlinear systems through the following steps:
(i) compute a nominal trajectory $x'(t)$; (ii) bound the logarithmic norm for some region around $x'(t)$; (iii) expand/contract norm balls using this rate~\cite{MaidensArcak:2014}.
Several variations of this simulation-guided approach have been proposed, but to our knowledge, each of the existing approaches bound the logarithmic norm for the entire Jacobian \change{set}.
One approach uses analytically derived bounds for the logarithmic norm and has been used in the study of switched systems~\cite{MaidensArcak:2014b} and in component-wise contraction techniques to improve scalability~\cite{ArcakMaidens:2018}. 
Another approach automatically computes upper bounds for the logarithmic norm using interval bounds for the Jacobian matrix~\cite{FanMitra:2016,FanMitra:2017,Fan:2016}.

From Corollary~\ref{cor:intervalM} and Remark~\ref{rem:interval_over}, we recall that the interval mixed Jacobian $[\calM]$ is never larger than the interval Jacobian $[\calJ]$.
Thus, we immediately obtain an improved automated approach for logarithmic norm-based reachability analysis by replacing the interval Jacobian matrix in~\cite{FanMitra:2017} with the interval mixed Jacobian matrix from~\eqref{eq:intervalMcond}.
In Algorithm~\ref{alg:reach-mjacM}, we provide an implementation of simulation-guided reachability using $[\calM]$.

\begin{algorithm}[tb]
\caption{Ellipsoidal $|\cdot|_{2,P^{1/2}}$-norm reachability using \texttt{mjacM}}
\label{alg:reach-mjacM}
\begin{algorithmic}[1]
    \STATE \textbf{Input:} initial set $\calB^{P_0}_{1}(x'_0)$, step horizon $\Delta t>0$, number of steps $N\in\bbN$
    \STATE $\calR \gets \{0\}\times \calB^{P_0}_1(x'_0)$
    \STATE $x'(t) \gets $ \change{simulated trajectory from initial condition $x'_0$}
    \FOR{$i = \change{0,\dots,N-1}$}
        \STATE $[X_0]\gets x'(i\Delta t) + [-\sqrt{\operatorname{diag}(P^{-1}_{\change{i}})}, \sqrt{\operatorname{diag}(P^{-1}_{\change{i}})}]$ \label{alg:reach-mjacM:iover}
        \STATE Integrate embedding system from $0$ to $\Delta t$, with initial condition $[X_0]$, obtaining $\{[X_t]\}_{t\in [0,\Delta t]}$ \label{alg:reach-mjacM:embsys}
        \STATE \change{Use \texttt{immrax.mjacM} to obtain $\{[\calM_t]\}_{t\in[0,\Delta t]}$ containing $M_{x'(t)} f([X_t],[0,1]^n)$ for every $t\in[0,\Delta t]$}
        \STATE Obtain $\{M^j\}_{j}$ satisfying $\bigcup_{t\in[0,\Delta t]} [\calM_t] \subseteq \co(\{M^j\}_{j})$\label{alg:reach-mjacM:corners}%
        \STATE $P_{\change{i+1}}\gets \change{\argmax}_P \log\det(P)$ s.t. \change{\eqref{eq:reach_demidovich} and \eqref{eq:reach_subset}}
        \STATE $\calR \gets \calR \cup \bigcup_{t\in(0,\Delta t]} \{t + i\Delta t\} \times \calB_{e^{ct}}^{P_{\change{i+1}}} (x'(t))$
        \STATE $P_{i+1} \gets \change{e^{-2c\Delta t}} P_{i+1}$
    \ENDFOR
    \RETURN Reachable tube $\calR$
\end{algorithmic}
\end{algorithm}

Suppose an initial set is specified as the ellipsoid $\calB_1^{P_0}(x'_0)$, and let $x'$ denote the fixed trajectory from initial condition $x'(t_0) = x'_0$.
The first step of Algorithm~\ref{alg:reach-mjacM} is to compute an initial interval over-approximation of the reachable set. 
In the literature~\cite{FanMitra:2016}, this step has been done using \emph{e.g.} Lipschitz bounds of the dynamics. 
Another approach is to use an \emph{inclusion function} to build an \emph{embedding system} whose trajectory bounds the behavior of the original system.
We use the interval analysis toolbox \texttt{immrax}, which automates the construction of this embedding system, and refer to~\cite{immrax} for a description on how interval analysis constructs these embedding systems for interval reachability.

Since the initial set for Algorithm~\ref{alg:reach-mjacM} is an ellipsoid, we  overapproximate the ellipsoid with an interval as the initial condition of the embedding system.
To do this, we note that
\begin{align*}
    \operatornamewithlimits{max}_{x\in\R^n : x^TPx \leq 1} \ell^T x = \sqrt{\ell^T P^{-1}\ell},
\end{align*}
for any $\ell\in\R^n$.
Thus, taking $\ell\in\{\pm e_i\}_{i=1}^n$ yields that the smallest interval containing $\calB_{1}^P(x')$ is $[x' - \sqrt{\operatorname{diag}(P^{-1})}, x' + \sqrt{\operatorname{diag}(P^{-1})}]$, where the $\sqrt{\cdot}$ is elementwise.
Given the initial condition $[X_0]$, the trajectory of the embedding system obtains an initial coarse interval reachable set $[X_t]$, satisfying $x(t)\in[X_t]$. 
The function \texttt{immrax.mjacM} then automatically computes the interval mixed Jacobian matrix $[\calM_t]$ from Corollary~\ref{cor:intervalM} at each time $t$, using automatic differentiation.
Once a finite set of corners $\{M^j\}_j$ satisfying $\bigcup_t[\calM_t] \subseteq \co(\{M^j\}_j)$ is chosen, we search for a feasible solution $(c,P)$ satisfying the following constraints,

\noindent
\begin{minipage}{0.55\columnwidth}
\begin{align}
    M^jP + PM^j \preceq 2cP \ \ \forall j, \tag{$\dagger$} \label{eq:reach_demidovich}
\end{align}
\end{minipage}%
\begin{minipage}{0.45\columnwidth}
\begin{align}
    P \preceq P_0. \tag{$\star$} \label{eq:reach_subset}
\end{align}
\end{minipage}
\vspace{0.25em}

The condition~\eqref{eq:reach_demidovich} is the LMI~\eqref{eq:ell2_LMI} on every corner, implying $\mu_{2,P^{1/2}}(\bigcup_t[\calM_t]) \leq c$ by convexity of $\mu$.
The condition~\eqref{eq:reach_subset} ensures the containment of the radius $1$ norm balls as $\calB_1^P(x'_0) \supseteq \calB^{P_0}_1(x'_0)$.
Since the constraints are convex only for fixed $c$ or $P$, we settle for a line search over $c$, and a maximization of $\log\det(P)$ to minimize the \change{volume} of $\calB_1^P(x'_0)$.
Once a feasible solution $(c,P)$ is found, Theorem~\ref{thm:main} ensures that any trajectory $x$ with initial condition $x(t_0)\in\calB_1^{P}(x'_0)$ satisfies $x(t)\in \calB^P_{e^{c(t - {t_0})}}(x'(t))$ for every $t\geq t_0$.
Rescaling to a radius $1$ ball allows this procedure to iterate to any desired horizon.

\begin{figure}
    \centering
    \includegraphics[width=\columnwidth]{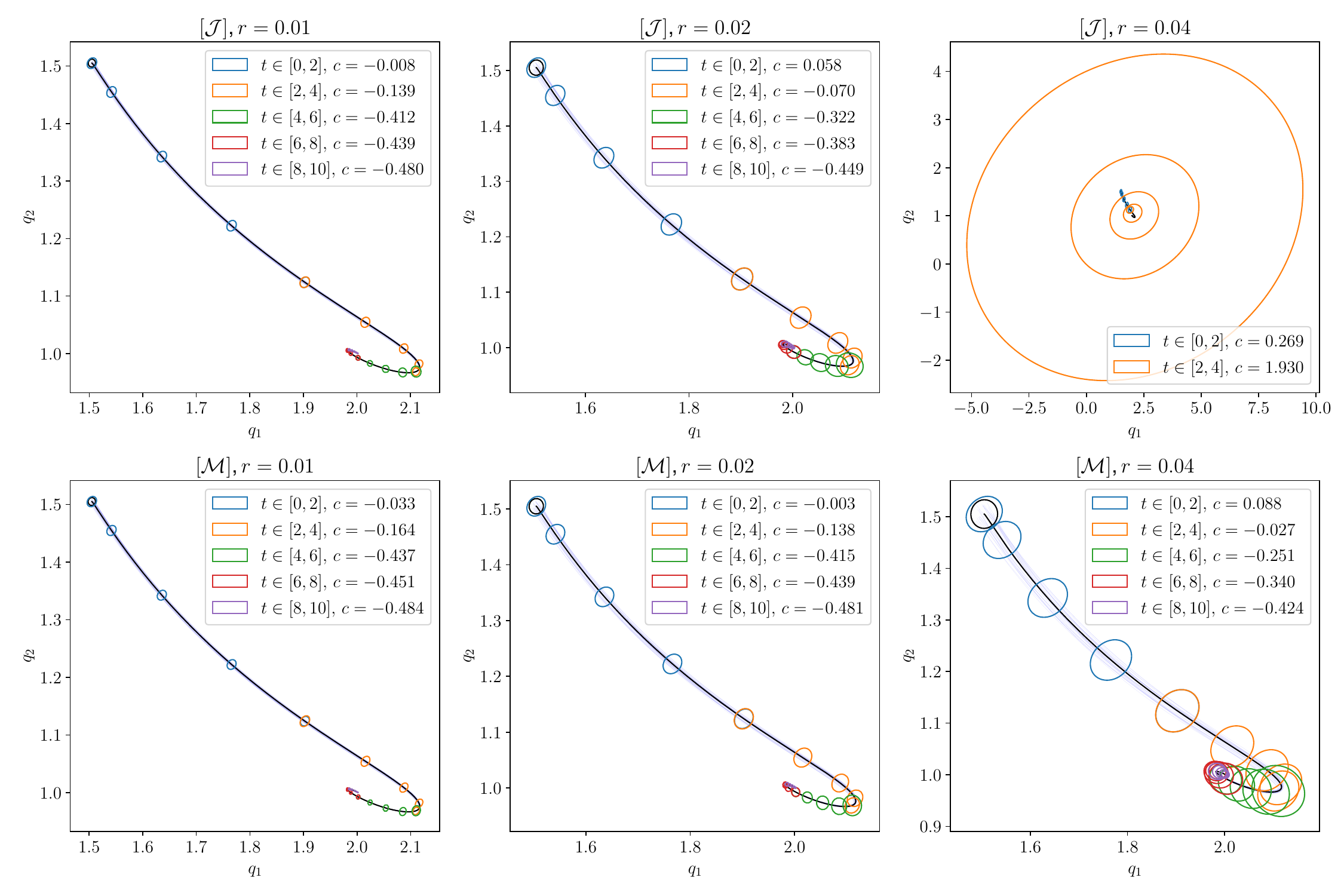}
    \caption{
    The reachable sets from Example~\ref{ex:robot}, comparing $[\calJ]$ (top) and $[\calM]$ (bottom) for varying sizes of initial sets.
    The ellipsoidal projection onto the $q_1$-$q_2$ plane is plotted for $t=0.5k$ (s).
    The known trajectory $x'$ is pictured in black, and several Monte Carlo simulations of the system are pictured in blue.
    $[\calJ]$ fails for the initial set with radius $r = 0.04$.
    }
    \label{fig:robot-arm}
\end{figure}

\begin{example}[Robot arm] \label{ex:robot}
    We compare to the pure Jacobian-based method from~\cite{FanMitra:2016}, using the robot arm model from~\cite{AngeliSontagWang:2000}
    \begin{align*}
        \dot{q}_1 &= z_1, \quad \dot{q}_2 = z_2, \\
        \dot{z}_1 &= \tfrac{1}{mq_2^2 + ML^2/3}(-2mq_2z_1z_2 - k_{d_1}z_1 + k_{p_1}(u_1 - q_1)), \\
        \dot{z}_2 &= q_2 z_1^2 + \tfrac1m (- k_{d_2} z_2 + k_{p_2} (u_2 - q_2)),
    \end{align*}
    with $u_1 = 2$, $u_2 = 1$, $m=M=1$, $L=\sqrt{3}$, $k_{p_1} = 2$, $k_{p_2} = 1$, $k_{d_1} = 2$, $k_{d_2} = 1$.

    We run Algorithm~\ref{alg:reach-mjacM} with hyperparameters $\Delta t = 2$, $N = 5$, with varying initial sets $\calB_r^P(x'_0)$ with $P$ and $x'_0$ from~\cite{FanMitra:2016}, and $r \in \{0.01,0.02,0.04\}$.
    We order the state variables into the following permutation $[z_1\ z_2\ q_1\ q_2]^T$ (see Remark~\ref{rem:permuting}).
    For the embedding system integration step at line~\eqref{alg:reach-mjacM:embsys}, we use Euler integration with a step size of $h=0.001$.
    At line~\eqref{alg:reach-mjacM:corners}, we take the interval union of the $\frac{\Delta t}{h} = 2000$ interval matrices computed during the integration step, \emph{i.e.}, the smallest interval matrix $[\calM]$ containing all $2000$ of these matrices, and sparsely extract the $64$ corners of $[\calM]$.
    There are only $64$ corners since the Jacobian matrix has only $6$ nonconstant elements.
\end{example}

\section{Conclusion}

In this note, we constructed a new linear differential inclusion bounding the behavior of nonlinear systems by traversing an element-wise path from $x'$ to $x$ rather than the traditional straight-line path.
When comparing to a known trajectory $x'$, this approach provides potential improvement compared to bounding the full Jacobian matrix as in previous approaches. 
For instance, we demonstrated computational improvement for interval-based algorithms bounding the logarithmic norm for reachability analysis. 

\vspace{-1em}

\bibliographystyle{ieeetr}
\bibliography{ldi.bib}

\end{document}